\newcommand{\comment}[1]{}
\newcommand{\N}{\mathbb{N}}
\newcommand{\harp}{\upharpoonright}
\newcommand{\deepc}{\text{-deep}_C}
\newcommand{\deepk}{\text{-deep}_K}
\def\varphi{\Phi}
\def\squareforqed{\hbox{\rlap{$\sqcap$}$\sqcup$}}
\def\qed{\ifmmode\squareforqed\else{\unskip\nobreak\hfil\penalty50\hskip1em\null\nobreak\hfil\squareforqed\parfillskip=0pt\finalhyphendemerits=0\endgraf}\fi}
\newenvironment{proof}{\noindent\textbf{Proof.}}{}
\newtheorem{theorem}{Theorem}[section]
\newtheorem{atheorem}{Theorem}{\bf }{\it }
\newtheorem{adefinition}[atheorem]{Definition}{\bf }{\rm }
\newtheorem{alemma}[atheorem]{Lemma}{\bf }{\it }
\newtheorem{acorollary}[atheorem]{Corollary}{\bf }{\it }
\author{Philippe Moser \affiliationmark{1}\thanks{P.~Moser was on Sabbatical
Leave to the National University of Singapore, supported in part by SFI
Stokes Professorship and Lectureship Programme.}
  \and Frank Stephan \affiliationmark{2}\thanks{F.~Stephan was supported
in part by NUS grants R146-000-181-112 and R146-000-184-112.}
  }
\title[Depth, Highness and DNR degrees]{Depth, Highness and DNR degrees}
\affiliation{
  Department of Computer Science, National University of Ireland, Maynooth, co Kildare, Ireland.\\
Department of Mathematics, The National University of Singapore, 10 Lower Kent Ridge Drive, S17, Singapore 119076, Republic of Singapore.
}
\keywords{Bennett logical depth, Kolmogorov complexity, algorithmic randomness theory, computability and randomness.}
\begin{document}
\maketitle
\begin{abstract}
  We study Bennett deep sequences in the context of recursion theory;
in particular we investigate the notions of $O(1)\deepk$,
$O(1)\deepc$, order$\deepk$ and order$\deepc$ sequences.
Our main results are that Mar\-tin-L\"of random sets are not
order$\deepc$, that every many-one degree contains a set which is
not $O(1)\deepc$, that $O(1)\deepc$ sets and order$\deepk$
sets have high or DNR Turing degree and that no $K$-trival
set is $O(1)\deepk$.
\end{abstract}
\section{Introduction}

\noindent
The concept of logical depth was introduced by C. Bennett \cite{b:bennett88}
to differentiate useful information (such as DNA) from the rest, with the key
observation that non-useful information pertains in both very simple
structures (for example, a crystal) and completely unstructured data (for
example, a random sequence, a gas). Bennett calls data containing useful
information logically deep data, whereas both trivial structures and
fully random data are called shallow.

The notion of useful information (as defined by logical depth)
strongly contrasts with classical information theory, which views
random data as having high information content. I.e., according to 
classical information theory, a random noise signal contains maximal
information,
whereas from the logical depth point of view, such a signal contains very
little useful information.

Bennett's logical depth notion is based on Kolmogorov complexity.
Intuitively a logically deep sequence (or equivalently a set) is one
for which the more
time a compressor is given, the better it can compress the sequence.
For example, both on trivial and random sequences, even when given more time,
a compressor cannot achieve a better compression ratio. Hence trivial and 
random sequences are not logically deep.

Several variants of logical depth have been studied in the past
\cite{b.antunes.depth.journal,DBLP:conf/cie/DotyM07,%
DBLP:journals/iandc/LathropL94,DBLP:journals/iandc/LathropL99,%
DBLP:journals/tcs/Moser13}.
As shown in \cite{DBLP:journals/tcs/Moser13}, all depth notions
proposed so far can be
interpreted in the compression framework which says a sequence
is deep if
given (arbitrarily) more than $t(n)$ time steps, a compressor
can compress the sequence $r(n)$ more bits than if given at 
most $t(n)$ time steps only. By considering different time bound families
for $t(n)$ (e.g. recursive, polynomial time etc.)
and the magnitude of compression improvement $r(n)$   - for
short: the \emph{depth magnitude} -
        (e.g. $O(1), O(\log n)$) one can capture all existing depth notions
\cite{b.antunes.depth.journal,DBLP:conf/cie/DotyM07,DBLP:journals/iandc/LathropL94,DBLP:journals/iandc/LathropL99,DBLP:journals/tcs/Moser13}
        in the compression framework \cite{DBLP:journals/tcs/Moser13}.
E.g. Bennett's notion is
        obtained by considering all recursive time bounds $t$ and a
constant depth magnitude,
        i.e., $r(n)=O(1)$.
        Several authors studied variants of Bennett's notion, by
considering different time bounds and/or different depth magnitude from
        Bennett's original notion
\cite{b.antunes.depth.journal,DBLP:journals/mst/AntunesMSV09,%
DBLP:conf/cie/DotyM07,DBLP:journals/iandc/LathropL94,%
DBLP:journals/tcs/Moser13}.

In this paper, we study the consequences these changes of different
parameters in 
Bennett's depth notion entail, by investigating the computational
power of the deep sets
yielded by each of these depth variants.
\begin{itemize}
\item We found out that the choice of the depth magnitude has consequences
 on the computational power of the corresponding deep sets.
 The fact that computational power implies Bennett depth was noticed
in \cite{DBLP:journals/iandc/LathropL94},
where it was shown that every high degree contains a Bennett deep set
(a set is high if, when given as an oracle,
its halting problem is at least as powerful as the halting problem
relative to the halting problem: $A$ is high iff $A' \geq_T \emptyset''$).
 We show that the converse also holds, i.e., that depth implies
computational power, by proving
 that if the depth magnitude is chosen to be ``large'' (i.e.,
$r(n)=\varepsilon n$), then
 depth coincides with highness (on the Turing degrees), i.e., a Turing
degree is high 
 iff it contains a deep set of magnitude $r(n)=\varepsilon n$.
\item For smaller choices of $r$, for example,
        if $r$ is any recursive order function, 
 depth still retains some computational power: we show that 
 depth implies either highness or diagonally-non-recursiveness,
denoted DNR (a total function is DNR if its image
 on input $e$ is different from the output of the $e$-th Turing
machine on input $e$). 
 This implies that if we restrict ourselves to
 left-r.e.\ sets, recursive order depth already implies highness.
 We also show that highness is not necessary by constructing
 a low order-deep set (a set is low if it is not powerful when given
as an oracle).
\item As a corollary, our results imply that weakly-useful sets
introduced in \cite{DBLP:journals/iandc/LathropL94} are either high
 or DNR (set $S$ is weakly-useful if the class of sets reducible to it
within a fixed time bound $s$
 does not have measure zero within the class of recursive sets).
\item Bennett's depth \cite{b:bennett88} is defined using prefix-free
Kolmogorov complexity.
 Two key properties of Bennett's notion are the so-called slow growth law, which
 stipulates that no shallow set can quickly (truth-table) compute a deep set,
 and the fact that neither Martin-L\"of random nor recursive  sets are deep.
 It is natural to ask whether replacing prefix-free with plain
complexity in Bennett's formulation
 yields a meaningful depth notion.
 We call this notion plain-depth.
 We show that the  random is not deep paradigm also holds in the setting
 of plain-depth. On the other hand we show that the slow growth law
fails for plain-depth:
 every many-one degree contains a set which is not plain-deep of
magnitude $O(1)$.
\item
A key property of depth is that ``easy'' sets should not be deep.
Bennett \cite{b:bennett88} showed that no recursive set is deep.
We give an improvement to this result by observing that no $K$-trivial
set is deep (a set
is $K$-trivial if the complexity of its prefixes is as low as possible).
Our result is close to optimal, since there exist deep 
ultracompressible sets \cite{DBLP:journals/iandc/LathropL99}.
\item
In most depth notions, the depth magnitude has to be achieved almost 
everywhere on the set. Some feasible depth notions also considered an
infinitely often version \cite{DBLP:conf/cie/DotyM07}. 
Bennett noticed in \cite{b:bennett88} that  infinitely often depth is
meaningless because every recursive set
is infinitely often deep.
We propose an alternative infinitely often depth notion that doesn't
suffer this limitation (called i.o.\ depth).
We show that little computational power is needed to compute i.o.\ depth,
i.e., every hyperimmune degree contains an i.o.\ deep set of magnitude
$\varepsilon n$ (a degree is hyperimmune
if it computes a function that is not bounded almost everywhere by
any recursive function),
and construct a $\Pi^0_1$-class
where every member is an i.o. deep set of magnitude $\varepsilon n$.
 For hyperimmune-free sets we prove that every non-recursive, non-DNR
hyperimmune-free set is i.o.\ deep of constant magnitude,
 and that every nonrecursive many-one degree contains such a set.
\end{itemize}
In summary, our results show that the choice of the magnitude for logical
depth has consequences on the computational
power of the corresponding deep sets, and that 
larger depth magnitude is not necessarily preferable over smaller magnitude.
We conclude with a few open questions regarding the constant magnitude case.

\section{Preliminaries}

We use standard computability/algorithmic randomness theory notations
see \cite{downey:book,nies:book,odifreddi}.
We use $\leq^+$ to denote less or equal up to a constant term. We fix
a recursive 1-1
pairing function $\langle \cdot \rangle : \N\times\N\rightarrow\N$.
We use sets and their characteristic sequences interchangeably,
we denote the binary strings of length $n$ by $\{0,1\}^n$
and $\{0,1\}^{\omega}$ denotes the set 
of all infinite binary sequences. The join of two sets $A,B$
is the set $A\oplus B$ whose characteristic sequence is
$A(0)B(0)A(1)B(1)\ldots$, that is, $(A \oplus B)(2n) = A(n)$ and
$(A \oplus B)(2n+1) = B(n)$ for all~$n$.
An order function is an unbounded non-decreasing function from
$\mathbb N$ to $\mathbb N$.
A time bound function is a recursive order $t$ such that there exists
a Turing machine
$\Phi$ such that for every $n$, $\Phi(n)[t(n)]\!\downarrow\, = t(n)$,
i.e., $\Phi(n)$ outputs
the value $t(n)$ within $t(n)$ steps of computation.
Set $A$ is left-r.e.\ iff the set of dyadic rationals strictly below
the real number $0.A$ (a.k.a. the left-cut of $A$ denoted $L(A)$) is 
recursively enumerable (r.e.), i.e., there is a recursive sequence of
non-decreasing rationals whose limit is $0.A$.
All r.e.\ sets are left-r.e., but the converse fails.

We consider standard Turing reductions $\leq_T$, truth-table
reductions $\leq_{tt}$ (where
all queries are made in advance and the reduction is total on all
oracles) and many-one reductions $\leq_m$.
Two sets $A,B$ are Turing equivalent ($A\equiv_T B$) if $A\leq_T B$
and $B\leq_T A$.
The Turing degree of a set $A$ is the set of sets Turing equivalent to $A$.
We fix a standard enumeration of all oracle Turing machines
$\Phi_1,\Phi_2,\ldots$;
the jump $A'$ of a set $A$ is the halting problem relative to $A$, i.e.,
$A' = \{e: \ \Phi^A_e(e)\!\downarrow\,\}$.
The halting problem is denoted $\emptyset'$.
A set $A$ is high (that is, has high Turing degree)
if its halting problem is as powerful as the halting problem of the halting
problem, i.e., $\emptyset'' \leq_T A'$.
High sets are equivalent to sets that compute dominating functions
(i.e., sets $A$ such that there is a function $f$ with $f\leq_T A$
such that for every computable function $g$ and for almost every $n$,
$f(n)\geq g(n)$), i.e., a set 
is high iff it computes a dominating function \cite{odifreddi}.
A set $A$ is low if its halting problem is not more powerful than the halting
problem of a recursive set, i.e., $A'\leq_T\emptyset'$. Note that
$\emptyset'$ is high relative to every low set.

If one weakens the dominating property of high sets to an infinitely
often condition, one obtains hyperimmune degrees.
A set is of hyperimmune degree if it computes a function that
dominates every recursive function on infinitely many inputs.
Otherwise the set is called of hyperimmune-free degree.

Another characterization of computational power used in computability
theory is the concept of diagonally non-recursive function (DNR).
A total function $g$ is DNR if for every $e$, $g(e)\neq \Phi_e(e)$,
i.e., $g$ can avoid the output of every Turing machine on at least one
specified input. A set is of DNR degree, if it computes a DNR
function. It is known that every r.e.\ DNR degree is high, actually even
Turing equivalent to $\emptyset'$ \cite{Arslanov}.

If one requires a DNR function to be Boolean, one obtains the PA-complete
degrees: A degree is PA-complete iff it computes a Boolean DNR function.
It is known that there exists low PA-complete degrees \cite{odifreddi}.

Fix a universal prefix free Turing machine $U$, i.e., 
such that no halting program of $U$ is a prefix of another halting program.
The prefix-free Kolmogorov complexity of string $x$, denoted $K(x)$,
is the length of the length-lexicographically first
program $x^*$ such that $U$  on input $x^*$ outputs $x$. It can be
shown that the value of $K(x)$  does not depend on the choice of 
$U$  up to an additive constant. 
$K(x,y)$ is the length of a shortest program that outputs the pair $\langle x,y\rangle$,
and $K(x|y)$ is the length of a shortest program such that $U$ outputs $x$
when given $y$ as an advice.
We also consider standard time bounded Kolmogorov complexity. 
Given time bound $t$ (resp. $s\in\N$), $K^t(x)$ (resp. $K_s(x)$)
denotes the length of the  shortest prefix free program $p$ such that
$U(p)$ outputs $x$ within $t(|x|)$ (resp. $s$) steps. 
Replacing $U$ above with a plain (i.e., non prefix-free) universal
Turing machine yields the notion
of plain Kolmogorov complexity, and is denoted $C(x)$.
We need the following counting theorem.

\begin{theorem}[Chaitin \cite{chaitin}]\label{t:counting}
There exists $c\in\N$ such that for every $r,n\in\N$, $|\{\sigma\in \{0,1\}^n:\
K(\sigma) \leq n + K(n) -r\}|\leq 2^{n-r+c}$.
\end{theorem}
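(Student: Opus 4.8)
The plan is to combine the Kraft inequality for prefix-free complexity with a \emph{localised} weight estimate that exploits the fact that all the strings in question share the common length $n$. Write $A_{n,r} = \{\sigma \in \{0,1\}^n : K(\sigma) \le n + K(n) - r\}$ for the set whose cardinality we must bound. The naive approach notes that every $\sigma \in A_{n,r}$ contributes $2^{-K(\sigma)} \ge 2^{-(n+K(n)-r)}$ to the universal sum $\sum_x 2^{-K(x)} \le 1$, which yields only $|A_{n,r}| \le 2^{n+K(n)-r}$. This is too weak by a factor of $2^{K(n)}$, and the heart of the argument is to recover this factor.

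First I would establish the localised bound
\[
\sum_{\sigma \in \{0,1\}^n} 2^{-K(\sigma)} \le 2^{-K(n)+c}
\]
for a constant $c$ independent of $n$. To see this, consider $Q(n) = \sum_{|\sigma|=n} 2^{-K(\sigma)}$; summing over all $n$ gives $\sum_n Q(n) = \sum_x 2^{-K(x)} \le 1$, so $Q$ is a lower semicomputable semimeasure on $\N$. Equivalently, the prefix-free machine $M$ that on input $p$ simulates $U(p)$ and, if this halts, outputs the length $|U(p)|$, assigns to the integer $n$ a total program weight of exactly $Q(n)$. By the coding theorem -- equivalently, by the fact that $K(n) \le -\log Q(n) + O(1)$ for every lower semicomputable semimeasure $Q$ -- we obtain $Q(n) \le 2^{-K(n)+c}$.

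With this in hand the theorem follows quickly. Since every $\sigma \in A_{n,r}$ satisfies $2^{-K(\sigma)} \ge 2^{-(n+K(n)-r)}$, we get
\[
|A_{n,r}| \cdot 2^{-(n+K(n)-r)} \le \sum_{\sigma \in A_{n,r}} 2^{-K(\sigma)} \le \sum_{|\sigma|=n} 2^{-K(\sigma)} \le 2^{-K(n)+c},
\]
and rearranging gives $|A_{n,r}| \le 2^{n+K(n)-r} \cdot 2^{-K(n)+c} = 2^{n-r+c}$, as required. Note that the $K(n)$ terms cancel exactly, which is precisely why the localised estimate, rather than the global Kraft bound, is needed.

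The main obstacle is the localised weight bound of the first step: one must argue that the \emph{whole} weight $\sum_{|\sigma|=n}2^{-K(\sigma)}$, and not merely its largest single term, is controlled by $2^{-K(n)}$. This is exactly the content of the coding theorem applied to $Q$, and it is what lets us trade the information ``the output has length $n$'' for a description of $n$ of length $K(n)+O(1)$. An attempt to prove the estimate by directly enumerating $A_{n,r}$ runs into trouble, because the threshold $n+K(n)-r$ depends on $K(n)$, which is only approximable from above and governs membership non-monotonically; the semimeasure argument sidesteps this entirely by working with the static weight inequalities alone.
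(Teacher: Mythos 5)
The paper does not prove this statement at all; it is quoted as a known result of Chaitin with a citation, so there is no in-paper argument to compare against. Your proof is correct and is the standard one: the localised weight bound $\sum_{|\sigma|=n}2^{-K(\sigma)}\leq 2^{-K(n)+c}$ via the coding theorem applied to the lower semicomputable semimeasure $Q(n)=\sum_{|\sigma|=n}2^{-K(\sigma)}$, followed by the Markov-type counting step, is exactly how this is proved in the literature, and you correctly identify that the whole content lies in recovering the factor $2^{K(n)}$ that the naive Kraft bound misses. One small inaccuracy: the machine $M$ with $M(p)=|U(p)|$ assigns to $n$ the weight $\sum_{p:|U(p)|=n}2^{-|p|}$, which is at least $Q(n)$ rather than exactly $Q(n)$ (each $\sigma$ may have many programs, not just the minimal one); this does not harm the argument, since that larger quantity is still a lower semicomputable semimeasure dominating $Q$, and the coding theorem applied to it yields the same conclusion $Q(n)\leq 2^{-K(n)+c}$. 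Alternatively you could skip the machine $M$ entirely, as your direct verification that $Q$ is a lower semicomputable semimeasure already suffices.
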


\noindent
A set $A$ is Martin-L\"of random (MLR) if none of
its prefixes are compressible
by more than a constant term, i.e., $\forall n \ K(A\harp n) \geq n-c$
for some constant $c$, where $A\harp n$ denotes the first $n$ bits of
the characteristic function of $A$.
A set $A$ is $K$-trivial if its complexity is as low as possible, i.e.,
$\forall n \ K(A\harp n) \leq K(n)+O(1)$.
See the books of Downey and Hirschfeldt \cite{downey:book}
and Nies \cite{nies:book} for more on $C$ and $K$-complexity,
MLR and trivial sets.

Effective closed sets are captured by $\Pi^0_1$-classes.
A $\Pi^0_1$-class $P$ is a class of sequences such that there is a
computable relation $R$ such that
$P=\{S \in \{0,1\}^{\omega}| \ \forall n \ R(S\harp n) \}$.

\begin{adefinition}[Bennett \cite{b:bennett88}]
Let $g(n)\leq n$ be an order. A set $S$ is $g\deepk$ if for every recursive
time bound $t$ and for almost all $n\in\N$, $K^t(S\harp n) - K(S\harp
n) \geq g(n)$.
\end{adefinition}

\noindent
A set $S$ is $O(1)\deepk$ (resp.\ order$\deepk$) if it is $c\deepk$
(resp.\ $g\deepk$) for every $c \in\mathbb N$ (resp.\ for some recursive
order $g$).
A set is said Bennett deep if it is $O(1)\deepk$.
We denote by $g\deepc$ the above notions with $K$ replaced with $C$.
It is easy to see that for every two orders $f,g$ such that $\forall
n\in\N \ f(n)\leq g(n)$,
every $g\deepk$ set is also $f\deepk$.

Bennett's slow growth law (SGL) states that creating depth requires 
time beyond a ``recursive amount'',
i.e., no shallow set quickly computes a deep one.

\begin{alemma}[Bennett \cite{b:bennett88}; Juedes, Lathrop and Lutz
\cite{DBLP:journals/iandc/LathropL94}]\label{l.sgl}
Let $h$ be a recursive order, and $A\leq_{tt} B$ be two sets.
If $A$ is $h\deepk$ (resp.\ $O(1) \deepk$) then $B$ is $h' \deepk$
(resp.\ $O(1) \deepk$) for some recursive order $h'$.
Furthermore given indices for the truth-table reduction and for $h$,
one can effectively compute an index for $h'$.
\end{alemma}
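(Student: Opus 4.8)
The plan is to transfer depth from $A$ to $B$ along the reduction, the guiding intuition being that $B$ is at least as informative as $A$, and that whatever ``surplus'' information $B$ carries over $A$ inflates its time\nobreakdash-bounded and unbounded complexities by the \emph{same} amount, so that the depth gap can only survive. First I would extract from the tt-reduction $\Gamma$ with $A=\Gamma^B$ its recursive use bound $u$ and set $p(m)=\max\{n:A\harp n$ is determined by $B\harp m\}$; then $p$ is a recursive, nondecreasing, unbounded function, effectively computable from $\Gamma$, and $\Gamma$ induces total recursive maps $\Gamma_m:\{0,1\}^m\to\{0,1\}^{p(m)}$ of recursively bounded running time with $\Gamma_m(B\harp m)=A\harp p(m)$.

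Write $\alpha=A\harp p(m)$ and, for an arbitrary recursive time bound $t'$, put $L=K^{t'}(B\harp m)$. Two transfers are immediate: decompressing a fast program for $B\harp m$ and applying $\Gamma_m$ gives $K^{t}(\alpha)\leq^+ L$ for a time bound $t$ computed from $t'$ and the running time of $\Gamma$ (fast transfer), while running any program for $B\harp m$ and applying $\Gamma_m$ gives $K(\alpha)\leq^+ K(B\harp m)$ (slow transfer). The slow transfer also lets me encode $B\harp m$ through its fiber: with the \emph{light fiber} $S=\{y\in\{0,1\}^m:\Gamma_m(y)=\alpha,\ K^{t'}(y)\le L\}$, which contains $B\harp m$ and is computable from $\alpha$, $m$ and $L$, I obtain $K(B\harp m)\leq^+ K(\alpha)+\log|S|+O(\log m)$ by naming $\alpha$ and the position of $B\harp m$ inside $S$; the decoding is an exponential search, which is harmless for unbounded $K$.

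The crux is to bound the light fiber by the \emph{fast-complexity deficit} of $\alpha$, namely $\log|S|\leq^+ L-K^{s}(\alpha)+O(\log m)$ for a larger but still recursive time bound $s$, effectively computable from $t'$ and $\Gamma$. This is exactly where the surplus is forced onto the fast side: the at least $|S|$ programs of length $\le L$ that decompress within $t'(m)$ steps to elements of $S$ all map under $\Gamma_m$ to the \emph{single} string $\alpha$, so $\alpha$ is a high-multiplicity image; since at most $2^{L+1}/|S|$ strings can have multiplicity $\ge|S|$, and this set is enumerable in roughly $2^{L}t'(m)$ steps, $\alpha$ is pinned down by an index of $\le L-\log|S|+O(\log m)$ bits via a program running inside the recursive bound $s$. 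Chaining the two estimates yields
\[
K^{t'}(B\harp m)-K(B\harp m)\ \geq^+\ K^{s}(\alpha)-K(\alpha)-O(\log m)\ \geq\ h(p(m))-O(\log m),
\]
the last step using that $A$ is $h\deepk$ \emph{at the recursive time bound $s$} and that $p(m)\to\infty$. Choosing $h'$ to be a recursive order eventually dominated by $h\circ p$ minus the overhead, and reading off all indices effectively, gives the order case; taking the source depth constant large enough gives the $O(1)\deepk$ case.

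I expect the counting step to be the main obstacle, for two reasons. Conceptually it is the only place that excludes the ``$B=A\oplus(\text{random})$'' phenomenon, where $B$ carries far more information than $A$ yet is still deep: one must show that this surplus is incompressible enough to raise $K^{t'}$ and $K$ in lockstep, which is precisely what the multiplicity bound achieves by enlarging the time bound (legitimate, since depth quantifies over \emph{all} recursive time bounds). Technically the delicate point is the logarithmic overhead: naive parameter encodings cost $O(\log m)$, which is fatal both for the constant magnitude and for slowly growing orders, and is genuinely too lossy when the use of $\Gamma$ is irregular (so that $m\gg p(m)$). To drive the overhead down to an additive constant I would recover as much of $m$ as possible from $|\alpha|=p(m)$ through $u$, and replace the crude counting by Chaitin's counting theorem (Theorem~\ref{t:counting}), so that naming $\alpha$ within the high-multiplicity set costs only $L-\log|S|+O(1)$ bits; securing this tight accounting is what makes $h'$ genuinely proportional to $h\circ p$ and what validates the $O(1)\deepk$ assertion.
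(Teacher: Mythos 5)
Your architecture is the right one and matches the standard argument for the slow growth law (the paper itself only cites Bennett and Juedes--Lathrop--Lutz and gives no proof): push a fast program for $B\harp m$ through $\Gamma$ to get $K^{t}(\alpha)\leq^+ K^{t'}(B\harp m)$, and then bound $K(B\harp m)$ from above by describing $B\harp m$ as ``$\alpha$ plus its position in the fiber'', with the fiber size controlled by the fast-complexity deficit of $\alpha$ via a multiplicity argument. You also correctly identify the one genuine difficulty. The problem is that you do not resolve it: as written, both of your encoding steps carry an $O(\log m)$ overhead whose source is not the crude counting per se but the need to \emph{specify the threshold} $L=K^{t'}(B\harp m)$ (together with $m$ and $\lfloor\log|S|\rfloor$) before the light fiber $S$ can be enumerated and indexed. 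Recovering $m$ from $p(m)$ through the use bound does not remove the cost of $L$, and Chaitin's counting theorem (Theorem~\ref{t:counting}) counts strings of low $K$ among all strings of a given length --- it says nothing about indexing into a time-bounded fiber at a given compression threshold. So the proposed repair does not close the gap, and the gap is fatal exactly where you say it is: the conclusion $K^{t'}(B\harp m)-K(B\harp m)\geq h(p(m))-O(\log m)$ is vacuous for the $O(1)\deepk$ case and for any order with $h\circ p=o(\log)$.

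The standard way to get the additive-constant version is to replace the cardinality count by a weighted count and invoke the coding theorem twice. Set $\nu(\alpha)=\sum\{2^{-K^{t'}(y)}: \Gamma_{|y|}(y)=\alpha\}$; this is a discrete semimeasure (each program contributes to at most one $\alpha$) computable within a recursive time bound as a function of $|\alpha|$, so the time-bounded Kraft--Chaitin/coding theorem gives $K^{s}(\alpha)\leq -\log\nu(\alpha)+O(1)$ for a recursive $s$ obtained effectively from $t'$ and $\Gamma$. Dually, $\rho(y\mid\alpha)=2^{-K^{t'}(y)}/\nu(\alpha)$ on the fiber is a semimeasure enumerable from $\alpha$, so the conditional coding theorem gives $K(B\harp m)\leq K(\alpha)+K^{t'}(B\harp m)+\log\nu(\alpha)+O(1)\leq K(\alpha)+K^{t'}(B\harp m)-K^{s}(\alpha)+O(1)$. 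This is exactly your two-step chain, but the semimeasure integrates over all thresholds $L$ at once, so no parameter has to be written down and the overhead is a genuine constant; the rest of your argument (applying $h$-depth of $A$ at the time bound $s$, taking $h'$ to be roughly $h\circ p$ minus constants, and reading off all indices effectively) then goes through for both the order and the $O(1)$ cases.
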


\noindent
The symmetry of information holds in the resource bounded case.

\begin{alemma}[Li and Vit\'anyi \cite{Li2008}]\label{l.sym.inform.}
For every time bound $t$, there is a time bound $t'$ such that for all
strings $x,y$ 
with $|y|\leq t(|x|)$, we have $C^t(x,y) \geq C^{t'}(x) + C^{t'}(y\ |
\ x) - O(\log C^{t'}(x,y))$.
Furthermore given an index for $t$ one can effectively compute an
index for $t'$.
\end{alemma}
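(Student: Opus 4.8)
The plan is to prove the nontrivial (``$\geq$'') direction of symmetry of information by a counting argument that mirrors the unbounded case, the only new ingredient being the need to carry out every enumeration inside a single recursive time bound. Fix $t$ and set $m = C^t(x,y)$, so that some program of length $m$ outputs $\langle x,y\rangle$ within the time budget. The central object is the enumerable set $S = \{\langle u,v\rangle : C^t(u,v)\leq m\}$, obtained by simulating $U$ for $t$ steps on every program of length at most $m$; since there are fewer than $2^{m+1}$ such programs, $|S|\leq 2^{m+1}$, and $\langle x,y\rangle\in S$.

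First I would handle the conditional term. Writing $S^u = \{v : \langle u,v\rangle\in S\}$ for the slice above $u$ and setting $k = \lceil\log|S^x|\rceil$, the string $y$ is the $i$-th element of $S^x$ for some index $i<|S^x|$. Given $x$ as advice together with $m$ and $k$, one can regenerate $S^x$ (run $U$ on all programs of length $\leq m$ for $t$ steps and keep those outputting a pair with first coordinate $x$) and read off $y$ from its index, so $C^{t'}(y\mid x)\leq k + O(\log m)$. For the unconditional term I would count fat slices: since $\sum_u |S^u| = |S|\leq 2^{m+1}$, the set $T = \{u : |S^u|\geq 2^{k-1}\}$ has at most $2^{m+2-k}$ elements and contains $x$. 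Given $m$ and $k$, $T$ is again enumerable by one pass over $S$, so $x$ is pinned down by its index in $T$, giving $C^{t'}(x)\leq (m-k) + O(\log m)$. Adding the two bounds, the $\pm k$ cancels and $C^{t'}(x) + C^{t'}(y\mid x)\leq m + O(\log m) = C^t(x,y) + O(\log m)$, which is the claimed inequality.

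It remains to exhibit a single $t'$ supporting all three enumerations and to argue it can be computed from an index for $t$. Every enumeration above ($S$, the slice $S^x$, and $T$) reduces to simulating $U$ for $t$ steps on each of the fewer than $2^{m+1}$ programs of length $\leq m$ and then tallying; because $|y|\leq t(|x|)$ we have $m = C^t(x,y)\leq |x|+|y|+O(1)\leq |x| + t(|x|) + O(1)$, so the whole computation runs in time bounded by a recursive function $f$ of $|x|$ that depends only on $t$. Taking $t'$ to dominate $f$ (and to be a genuine time bound in the sense fixed in the preliminaries) makes all the descriptions above legal $C^{t'}$-programs, and since $f$ is produced uniformly from an index for $t$, so is an index for $t'$; this yields the effectivity clause.

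I expect the main obstacle to be the time-bound bookkeeping rather than the combinatorics. One must check that the additive overhead is genuinely logarithmic of the stated order $O(\log C^{t'}(x,y))$ --- it comes from the self-delimiting descriptions of the integer parameters $m,k$ together with the lengths of the two indices --- and that the three separate enumerations can all be absorbed into one recursive $t'$ without the budget for, say, reconstructing $T$ (which requires a full pass over $S$) exceeding what $t'(|x|)$ allows. Confirming that $t'$ remains an order function that computes itself within its own budget, uniformly in $t$, is the delicate point.
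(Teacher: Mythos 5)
The paper does not actually prove this lemma --- it is imported verbatim from Li and Vit\'anyi \cite{Li2008} --- so there is no in-paper argument to compare against. Your proposal is the standard Kolmogorov--Levin counting proof of symmetry of information, adapted to the time-bounded setting, and this is essentially the argument in the cited source. The combinatorial core is correct: with $m=C^t(x,y)$, $S=\{\langle u,v\rangle: C^t(u,v)\le m\}$, $k=\lceil\log|S^x|\rceil$ and $T=\{u:|S^u|\ge 2^{k-1}\}$, the bounds $C^{t'}(y\mid x)\le k+O(\log m)$ and $C^{t'}(x)\le m-k+O(\log m)$ do sum to the claimed inequality, and the hypothesis $|y|\le t(|x|)$ is used exactly where it should be, to bound $m$ by $|x|+t(|x|)+O(1)$ so that the brute-force enumeration runs in time recursive in $|x|$, uniformly in an index for $t$.

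Two bookkeeping points deserve to be made explicit, and you correctly identify the second as the delicate one. First, programs of length at most $m$ need not halt, so ``simulate $U$ for $t$ steps'' must be read as: restrict attention to output pairs of length at most some $L=|x|+t(|x|)+O(1)$, run each program for $t(L)$ steps, and keep those halting within $t$ of their own output length; the resulting truncated $S'\subseteq S$ still contains $\langle x,y\rangle$ (because $|y|\le t(|x|)$) and all the counting goes through for $S'$ provided $k$ and $T$ are defined from $S'$ consistently. Second, under the paper's convention the budget for a $C^{t'}$-description is $t'$ applied to the length of the \emph{described} string; your single bound $f(|x|)$ fits comfortably into $t'(|x|)$ for the term $C^{t'}(x)$, but for $C^{t'}(y\mid x)$ the nominal budget would be $t'(|y|)$, which your enumeration (whose running time depends on $|x|$) need not respect when $|y|\ll|x|$. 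This is resolved by the (implicit) convention that conditional time-bounded complexity charges time in the total input/output length, or by noting that in every application in the paper $|y|$ and $|x|$ are comparable; either way, you should state which reading you are using. With that fixed, the argument is complete.
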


\begin{acorollary}\label{c.sym.inform.}
Let $t$ be a time bound and $x,a$ be strings. Then there exists a time
bound $t'$ such that 
for every prefix $y$ of $x$ we have $C^t(y \ | \ a) \geq^+ C^{t'}(x \
|\  a) - |x| + |y| - O(\log C^t(y \ | \ a))$.
Furthermore given an index for $t$ one can effectively compute an
index for $t'$.
\end{acorollary}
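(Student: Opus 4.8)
The plan is to establish the equivalent (rearranged) upper bound $C^{t'}(x \mid a) \leq^+ C^t(y \mid a) + (|x| - |y|) + O(\log C^t(y \mid a))$, from which the stated inequality follows at once by moving the tail term $|x|-|y|$ and the logarithmic term to the other side. This is really the elementary (subadditivity) direction of symmetry of information specialised to the prefix situation, so I would argue it directly rather than through Lemma~\ref{l.sym.inform.}. The idea is simple: since $y$ is a prefix of $x$, write $x = yz$ where $z$ is the suffix of length $|x|-|y|$; to describe $x$ it suffices to describe $y$ and then append the extra $|x|-|y|$ bits of $z$ verbatim.

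First I would fix a shortest $t$-bounded program $p$ for $y$ given $a$, so that $|p| = C^t(y \mid a)$ and $U$ on input $p$ with advice $a$ outputs $y$ within $t(|y|)$ steps. I would then build a program for $x$ (given $a$) of the form $q = \bar{\ell}\, p\, z$, where $\bar{\ell}$ is a self-delimiting encoding of the number $\ell = |p|$ using $O(\log |p|) = O(\log C^t(y\mid a))$ bits, $p$ is the program above, and $z$ is the suffix of $x$. A decoding machine, on input $q$ with advice $a$, reads $\bar{\ell}$ to recover $\ell$, extracts the following $\ell$ bits as $p$, simulates $U(p,a)$ to obtain $y$, then reads the remaining bits of its (finite) input as $z$ and outputs the concatenation $yz = x$. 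Note that inside $q$ we only need to delimit $p$ from $z$; because we work with plain complexity the end of $z$ is simply the end of the input, so no further delimiter is required. This yields $C^{t'}(x\mid a) \leq^+ |q| = C^t(y\mid a) + (|x|-|y|) + O(\log C^t(y\mid a))$ for the universal machine underlying $C^{t'}$, and rearranging gives exactly the claimed bound $C^t(y\mid a) \geq^+ C^{t'}(x\mid a) - |x| + |y| - O(\log C^t(y\mid a))$.

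The step requiring the most care is the bookkeeping of the time bound and the verification that it is effectively computable from $t$. The decoder spends at most $t(|y|)$ steps simulating $U(p,a)$ (it does not need $t(|y|)$ in advance, since it merely runs $p$ until it halts) plus $O(|x|)$ steps to read $\bar{\ell}$ and copy $z$; since $t$ is non-decreasing and $|y|\leq|x|$, we have $t(|y|)\leq t(|x|)$, so taking $t'(m) = t(m) + c\,m$ for a suitable constant $c$ (rounded up to a genuine time bound, i.e.\ a recursive order meeting the self-computability condition) bounds the total running time by $t'(|x|)$. As $t$ is presented by an index and these modifications are elementary, an index for $t'$ is computed effectively from one for $t$, which settles the ``furthermore'' clause. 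I expect the only genuine subtlety to be confirming that this $t'$ remains a valid time bound in the sense of the Preliminaries, rather than the main construction itself.
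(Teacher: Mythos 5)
Your proposal is correct and follows essentially the same route as the paper's own proof: both take a $C^t$-minimal program $p$ for $y$ given $a$, prepend/append a self-delimiting marker of $O(\log C^t(y\mid a))$ bits to separate $p$ from the verbatim suffix $z$ of length $|x|-|y|$, and observe that decoding runs in time $t'(\cdot)=t(\cdot)+O(n)$. Your write-up merely spells out the bookkeeping (the encoding of $\ell=|p|$ and the verification that $t'$ is a legitimate time bound) that the paper leaves implicit.
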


\begin{proof}
Given $p$ a $C^t$-minimal program with advice $a$ for $y$, $|x|-|y|$
remaining bits, and a delimiter after $p$, 
one can reconstruct $x$ in time $t'(\cdot)=t(\cdot) + O(n)$ steps given a. Thus
$C^t(y\ | \ a ) + |x| - |y| +O(\log C^t(y\ | \ a)) \geq^+ C^{t'}(x\ | \ a )$.
\end{proof}

\section{$C$-Depth}

\noindent
Bennett's original formulation \cite{b:bennett88} is based on $K$-complexity.
In this section we investigate the depth notion obtained by replacing
$K$ with $C$, which we call plain depth.
We study the interactions of plain depth with the notions of
Martin-L\"of random sets, many-one
degrees and the Turing degrees of deep sets.

\subsection{MLR is not order$\deepc$}

\noindent
The following result is the plain complexity version of
Bennett's result that no MLR sets are Bennett deep.

\begin{theorem} \label{mlrdeepc}
For every MLR set $A$ and for every recursive order $h$,
$A$ is not $h \deepc$.
\end{theorem}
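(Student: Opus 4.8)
The plan is to argue by contradiction: suppose $A$ is MLR and $h\deepc$ for some recursive order $h$, and manufacture a Martin-L\"of test that $A$ fails. The first move is to extract a pointwise compressibility estimate from the depth hypothesis. I would fix a fast, time-constructible recursive order as the time bound, say $t(n)=n^2$, which is a legitimate time bound in the sense of the definition of $h\deepc$. The verbatim ``print these $n$ bits'' program witnesses $C^t(A\harp n)\leq n+O(1)$, so the depth inequality $C^t(A\harp n)-C(A\harp n)\geq h(n)$, which holds for almost all $n$, immediately yields $C(A\harp n)\leq n-h(n)+c_0$ for all sufficiently large $n$ and some fixed constant $c_0$.

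Next I would turn this into a measure bound. For each $n$ set $\tilde S_n=\{\sigma\in\{0,1\}^n : C(\sigma)\leq n-h(n)+c_0\}$. By the elementary counting bound for plain complexity (fewer than $2^{k+1}$ strings have complexity at most $k$) one gets $|\tilde S_n|<2^{\,n-h(n)+c_0+1}$, so the clopen set $[\tilde S_n]=\{X:X\harp n\in\tilde S_n\}$ has measure below $2^{-h(n)+c_0+1}$. Crucially $\tilde S_n$ is r.e.\ uniformly in $n$, since $C(\sigma)\leq k$ is an r.e.\ predicate.

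Then I would assemble the test. Because $h$ is an unbounded recursive order, for each $k$ I can compute the least length $m_k$ with $h(m_k)\geq k+c_0+1$, and put $V_k=[\tilde S_{m_k}]$. This is a uniformly r.e.\ sequence of open sets with $\mu(V_k)<2^{-k}$, hence a Martin-L\"of test. For all large $k$ the length $m_k$ exceeds the threshold past which the compressibility estimate holds, so $C(A\harp{m_k})\leq m_k-h(m_k)+c_0$, i.e.\ $A\in V_k$. Thus $A$ lies in a tail of $\bigcap_k V_k$; discarding finitely many levels exhibits $A$ as failing a genuine Martin-L\"of test, contradicting $A$ being MLR.

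The step I expect to be the crux is the very first reduction, and the discipline of \emph{not} trying to bound $C^t(A\harp n)-C(A\harp n)$ pointwise: for prefixes of a random set this gap can reach $O(\log n)$, which swamps a slowly growing $h$, so a direct pointwise comparison cannot succeed. The contradiction strategy instead feeds the \emph{almost-everywhere} compressibility into the test level by level, using a single well-chosen length $m_k$ per level rather than summing $2^{-h(n)}$ over a tail; the latter sum may diverge for slow $h$, so it is essential that one large length alone drives the measure below $2^{-k}$. Finally, the constant $c_0$ need not be computable --- only its existence matters, since the test built with the correct value of $c_0$ is already a legitimate Martin-L\"of test.
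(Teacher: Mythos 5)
Your proof is correct, but it runs in the opposite direction from the paper's and uses a different characterization of randomness. You use the depth hypothesis first (with the time bound $t(n)=n^2$ and a print program) to get $C(A\harp n)\leq n-h(n)+c_0$ almost everywhere, and then contradict randomness by packaging the uniformly r.e.\ sets of compressible strings into a Martin-L\"of test, taking one well-chosen length $m_k$ per level so that a single counting bound drives the measure below $2^{-k}$. The paper instead uses randomness first: working with the $K$-incompressibility definition of MLR from its preliminaries, it restricts to the infinitely many $n$ at which $h$ increases, so that $K(n)<h(n)/4$ there, converts a hypothetical short plain program for $A\harp n$ of length $n-m$ into a short prefix-free one by appending $2\log m+K(n)$ bits, and deduces $C(A\harp n)\geq n-h(n)/2$ at those lengths; comparing with $C^{n^2}(A\harp n)\leq n+O(1)$ then refutes depth directly at infinitely many $n$. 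Your route is arguably more transparent and avoids the constant-chasing, but it tacitly invokes the Schnorr--Levin equivalence between the test characterization and the $K$-definition of MLR (the only definition the paper gives); the paper's route needs no such equivalence and yields as a byproduct an explicit incompressibility estimate for MLR sets at the jump points of $h$. Your closing remarks --- that summing $2^{-h(n)}$ over all $n$ could diverge, so one length per test level is essential, and that the constant $c_0$ may be hard-wired non-uniformly --- correctly identify the delicate points of your construction.
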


\begin{proof}
Suppose by contradiction that set $A$ is MLR and $h \deepc$, for some
$h$ as above.
We claim that $\exists^{\infty} n \ C(A\harp n) \geq n - h(n)/2$. To
prove the claim, let 
$N=\{n\in\N:\ h(n)\neq h(n-1)\}$. Then given $a=h(n)$ with $n\in N$,
the program $p$: ``Print 
the smallest $i$ such that $h(i)=a$.'' is a program for $n$ of size
$K(a)+O(1)$, i.e.,
$$K(n) \leq K(a) +O(1) \leq 2 \log h(n) +O(1) < h(n)/4. $$
Suppose $q$ is a $C$-minimal program for $A\harp n$ of size $n-m$,
then appending
$2\log m + K(n)$ bits to $q$ yields a 
prefix free program $q'$ for $A\harp n$ of size $2\log m + K(n) + n -m$.

Since $A$ is MLR we have $|q'|\geq n -O(1)$, i.e., $2\log m + K(n) + n
-m >n -O(1)$ which implies
$K(n) > m - 2\log m -O(1) > 2/3 m$ (for $m$ sufficiently large).
If $n\in N$, then $2/3m < K(n) < h(n)/4$, thus $m<h(n)/2$, i.e.,
$C(A\harp n) > n - h(n)/2$,
which proves the claim.

Since for all $n\in\N$ we have $C^{n^2}(A\harp n) \leq n + O(1)$ (via
a ``print'' program), it follows that for every $n\in N$,
$$C^{n^2}(A\harp n) - C(A\harp n) \leq n + O(1) - n + h(n)/2 \leq^+ h(n)/2$$
which contradicts that $A$ is $h \deepc$.
\end{proof}

\medskip
\noindent
Sequences that are MLR relative to the halting problem are called $2$-random.
Equivalently a sequence $A$ is   $2$-random iff there is a constant
$c$ such that $C(A \harp n) \geq n-c$ for infinitely many $n$
\cite{Mi04,NST05}. Since there is a constant $c'$ such that $n+c'$ is a trivial
upper bound on the plain Kolmogorov complexity of any string of length $n$,
it is clear that no $2$-random sequence can be $O(1)\deepc$. Thus most
MLR sequences
are not $O(1)\deepc$.

\subsection{The SGL fails for $C$-depth}

\noindent
The following result shows that 
the Slow Growth Law fails for plain depth.

\begin{theorem}\label{t:manyone-deg-deepC}
Every many-one degree contains a set which is not $O(1) \deepc$.
\end{theorem}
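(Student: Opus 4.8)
The plan is to fix a representative $A$ of the given many-one degree and to produce, inside that degree, a diluted copy $B\equiv_m A$ whose prefixes are cheap to describe on an infinite set of lengths. Unwinding the definitions, $B$ fails to be $O(1)\deepc$ precisely when there is a single recursive time bound $t$ and a constant $c$ with $C^t(B\harp n)-C(B\harp n)<c$ for infinitely many $n$; so it suffices to exhibit one time bound under which the depth gap dips below a fixed constant infinitely often. The engine I would use is the trivial length bound $C(x)\geq C(|x|)-O(1)$ (the length of a string is computable from the string) together with fast reconstructibility. For the recursive many-one degrees this already suffices: if $R$ is recursive there is a recursive time bound $t$ with $C^t(R\harp n)\leq C(n)+O(1)$ (given $n$, compute and print $R\harp n$), while $C(R\harp n)\geq C(n)-O(1)$, so along the infinitely many incompressible lengths $n$, where $C(n)\geq \log n-O(1)$, the gap is $O(1)$ and $R$ is not $O(1)\deepc$.

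For a nonrecursive $A$ I would keep the same template but spread $A$ out so that long prefixes of $B$ carry only a tightly controlled amount of genuinely $A$-dependent information. Concretely, fix a fast growing recursive sequence of data positions $p_0<p_1<\cdots$, set $B(p_k)=A(k)$, and fill every non-data position with the fixed bit $A(0)$; then $A\leq_m B$ via $k\mapsto p_k$, and $B\leq_m A$ by sending each $p_k$ to $k$ and every other position to $0$, so $B\equiv_m A$ and no noncomputable padding value is required. The lengths I would examine are those $n$ lying deep inside a gap $(p_k,p_{k+1})$ and chosen to be incompressible: there $B\harp n$ is the constant string $A(0)^n$ altered only at the recursively located positions $p_0,\dots,p_k$, so $C(B\harp n)\geq C(n)-O(1)$, while a reconstruction program needs only $n$ (which fixes the scaffolding) together with the revealed bits $A\harp{(k+1)}$.

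The hard part will be controlling the contribution of $A\harp{(k+1)}$ to the gap. A fast program can supply those bits only through time-bounded compression, costing about $C^t(A\harp{(k+1)})$, whereas the unbounded description pays only $C(A\harp{(k+1)})$; taken naively their difference is the depth of $A$ at length $k+1$, which tends to infinity and would wrongly suggest that $B$ inherits the depth of $A$. This is exactly the point at which plain and prefix complexity diverge and the slow growth law breaks down: in the plain setting the length $n$ and the recursive scaffolding are handed to the decoder essentially for free, and the logarithmic slack in the plain symmetry of information (Corollary~\ref{c.sym.inform.}) can be spent to \emph{absorb} the $A$-dependent term rather than merely to delimit it, so that the lower bound one would need to certify depth of $B$ is swamped by an $O(\log n)$ loss and never materialises. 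I would therefore tune the growth rate of the $p_k$ and the choice of incompressible checkpoints so that, for infinitely many $k$, a shortest description of $B\harp n$ is itself realizable within the fixed time bound up to a constant, forcing $C^t(B\harp n)\leq C(B\harp n)+c$ there. Pinning this inequality on an infinite set of lengths — rather than merely failing to prove the reverse — is the crux of the argument, and is precisely what separates $C$-depth from Bennett's $K$-depth.
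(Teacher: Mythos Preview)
Your template and the recursive case are fine, but for nonrecursive $A$ the argument has a genuine gap that you yourself flag and then talk around rather than close. With $B(p_k)=A(k)$ and padding elsewhere, reconstructing $B\harp n$ for $n\in(p_k,p_{k+1})$ needs $n$ \emph{and} the string $A\harp(k+1)$. Your time-bounded upper bound is therefore essentially $\lfloor\log n\rfloor + (k+1) + O(1)$ (just print the bits of $A$), while your lower bound on $C(B\harp n)$ is only $C(n)-O(1)$. Even at an incompressible $n$ this gives a gap of order $k$, which goes to infinity. The appeal to Corollary~\ref{c.sym.inform.} does not help: that corollary gives \emph{lower} bounds on time-bounded complexity, not the upper bound on $C^t(B\harp n)$ you need, and there is no reason to have $C(B\harp n)\geq \log n + k - O(1)$ for any $n$ in the gap (for instance, if $A\harp(k+1)$ happens to have low plain complexity, $C(B\harp n)\leq \log n + O(1)$ everywhere in the gap). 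So ``tuning the growth of $p_k$'' cannot absorb the $A$-dependent term; the $k$ extra bits are real.

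The paper's proof avoids this by a single extra idea you are missing: it arranges that $A\harp(k+1)$ is \emph{encoded in the length itself}. Concretely, with $B=\{2^{2^p}:p\in A\}$ one sets $m=\sum_{i\leq k}A(i)2^i$ and picks $n\in[2^m,2^{m+1})$ with $C(n)\geq m$. Then the binary representation of $n$ has length $m+O(1)$, and from its length $m$ one reads off $A(0),\ldots,A(k)$ as the binary digits of $m$; since every position $2^{2^p}<n$ satisfies $p\leq\log_2 m\leq k$, this is exactly the $A$-information needed to print $B\harp n$. Hence $C^t(B\harp n)\leq m+c$ for a single fixed time bound, while $C(B\harp n)\geq C(n)-O(1)\geq m-O(1)$, and the gap is bounded for every $k$. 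Your construction becomes correct once you replace ``choose an incompressible $n$ in the gap'' by ``choose $n$ whose bit-length \emph{is} (a coding of) $A\harp(k+1)$ and which is incompressible among numbers of that bit-length''.
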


\begin{proof}
The recursive many-one degrees consist only of sets which are
not $O(1) \deepc$. So consider any set
$A$ different from both $\emptyset$ and $\mathbb N$
and let $B = \{2^{2^p}: p \in A\}$.  Given any $k$,
choose $m = A(0)+2 A(1)+\ldots+2^k A(k)$ and let $n$ be any number
between $2^m$ and $2^{m+1}$ which has $C$-complexity $m$. Now
on one hand $C(B \upharpoonright n) \geq^+ C(n) \geq m$ and
on the other hand, one can compute $m$ from the $m$-digit binary number
representing $n$ and one can compute $A(0),A(1),\ldots,A(k)$ from $m$
and using $B(2^{2^p}) = A(p)$ one can compute $B \upharpoonright n$
from the binary representation of $n$ and its length $m$ so that
$C^t(B \upharpoonright n) \leq m+c$ for some time bound $t$ and some
constant $c$ independent of $n$
and $m$. This shows that $A$ is not $O(1) \deepc$. Clearly $A \leq_m B$.
Furthermore, $B \leq_m A$ by mapping all values of form $2^{2^p}$ to $p$ and
all other values to a fixed non-element of $A$.
\end{proof}

\medskip
\noindent
Note that this result shows that order$\deepk$ does not imply
order$\deepc$: all the sets in the  truth-table degree of any order$\deepk$
set are all order$\deepk$ (by the SGL), but this degree contains
a non order$\deepc$ set by the previous result.

\subsection{Depth implies highness or DNR}

\noindent
The following result shows that being constant deep for $C$ 
implies computational power.

\begin{theorem}\label{t.deep.high.or.dnc}
Let $A$ be an $O(1) \deepc$ set.  Then $A$ is high or DNR.
\end{theorem}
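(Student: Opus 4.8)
The plan is to prove the contrapositive: assuming $A$ is neither high nor of DNR degree, I will exhibit a constant $c$ and a recursive time bound $t$ with $C^t(A\harp n)\le C(A\harp n)+c$ for infinitely many $n$, which is exactly the negation of $A$ being $O(1)\deepc$. The two hypotheses enter through the following extracted facts. From ``$A$ is not of DNR degree'': for every total $A$-recursive function $f$ there are infinitely many $e$ with $\Phi_e(e)\!\downarrow\,=f(e)$; for otherwise only finitely many $e$ give agreement, and patching $f$ on that finite set (hard-wiring, for each exceptional $e$, a value different from $\Phi_e(e)$) would yield a total $A$-recursive DNR function, a contradiction. From ``$A$ is not high'': since highness is equivalent to computing a function dominating every recursive function, for every total $A$-recursive $f$ there is a recursive $g$ with $g(n)\ge f(n)$ for infinitely many $n$. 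The contrapositive framing is the right one here because the negation of depth is an infinitely-often statement, matching the infinitely-often flavour of both extracted facts, whereas a direct proof toward highness would need almost-everywhere control.

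The core of the argument is to fold the non-DNR fact into a single total $A$-recursive function to which the non-high fact is then applied. Using the oracle, for each $n$ one can compute the whole non-increasing approximation $s\mapsto C_s(A\harp n)$, and for any candidate length the first stage at which a program of that length is found. The aim is to define an $A$-recursive ``compression-time'' function $M(n)$ such that, for infinitely many $n$, running the universal machine for $M(n)$ steps already produces a program for $A\harp n$ of length within an additive constant of $C(A\harp n)$. The non-DNR fact supplies the certificate of near-optimality: by feeding $A$'s guessed programs to the diagonal through padding, infinitely many indices $e$ witness that a program found by $A$ is reproduced by a halting recursive computation $\Phi_e(e)$, so that at those stages $A$ recognises it holds a genuinely reproducible, hence near-shortest, program. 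Once $M$ is total and $A$-recursive, the non-high fact yields a recursive $g$ with $g\ge M$ infinitely often; setting $t(n)=g(n)$ (rounded up to a genuine time bound) gives, on the infinitely many $n$ with $g(n)\ge M(n)$, the bound $C^t(A\harp n)=C_{t(n)}(A\harp n)\le C(A\harp n)+c$.

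The additive-constant precision, as opposed to a merely logarithmic gap, is where Theorem~\ref{t:counting} is used: its bound of $2^{\,n-r+c_0}$ on the number of length-$n$ strings with $K(\sigma)\le n+K(n)-r$ (and the analogous counting bound for $C$) controls how many compressible candidates the time-bounded search must sift through, and lets one argue that the length at which $A$'s search stabilises agrees with $C(A\harp n)$ up to an additive constant on the relevant infinite set, rather than only up to the $\log n$ overhead that a naive ``print the index'' program would incur.

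The main obstacle is totality. Verifying the certificate ``$\Phi_e(e)$ halts and equals $A$'s guess'' requires detecting halting, which $A$ cannot do uniformly, so the naive $M(n)$ is only partial $A$-recursive; the decisive technical point is to convert this partial certificate into a total $A$-recursive function, for instance by dovetailing over stages and letting $M(n)$ count the stages elapsed until the first observed agreement, so that $M$ is total while still being majorised, infinitely often, precisely on the agreement indices supplied by the non-DNR fact. Synchronising these infinitely-many agreement indices with the infinitely-many escape indices of the non-high fact, without sacrificing the constant precision, is the crux that the remaining details must resolve.
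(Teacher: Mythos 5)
Your high-level skeleton matches the paper's: argue the contrapositive, use that non-DNR-ness gives infinitely many $e$ with $\Phi_e(e)\!\downarrow\,=f(e)$ for any total $f\leq_T A$, and use that non-highness gives a recursive function majorising any given $A$-recursive function on infinitely many inputs. But the core mechanism you propose does not work. You want $A$ to recognise, via the diagonal agreements, that a program it has found for $A\harp n$ is within an additive constant of $C(A\harp n)$. The non-DNR fact cannot deliver this: it only guarantees that $\Phi_e(e)$ reproduces a prescribed $A$-computable \emph{value} infinitely often; it says nothing about the \emph{length} of any program, and since $C$ is only approximable from above, no $A$-recursive procedure can certify at any finite stage that a candidate program is near-shortest. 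Your appeal to Theorem~\ref{t:counting} to ``control the candidates the search must sift through'' does not close this hole --- that theorem bounds how many strings of a given length are compressible, not when the approximation $C_s(A\harp n)$ has settled --- and you yourself flag the synchronisation of the two infinite families of indices as an unresolved crux. As it stands, the argument has a genuine gap at its centre.

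The idea you are missing is that one never needs to approximate $C(A\harp n)$ at all. The paper takes $f(m)$ to be a coding of $A\harp 2^{m+1}$; defines $g(m)$ as the $A$-recursive time needed to search forward from $m$ for the next agreement index $m'$ with $\Phi_{m'}(m')\!\downarrow\,=A\harp 2^{m'+1}$ and to evaluate it; and takes a recursive increasing $h$ with $h(m)\geq g(m)$ infinitely often, so that by monotonicity $h(m')$ bounds the evaluation time of $\Phi_{m'}(m')$ at the agreement index itself --- this is exactly the synchronisation you worried about. For each such index $m$ it then picks a length $n$ with $2^m\leq n\leq 2^{m+1}$ and $C(n)\geq m$. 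Then $C(A\harp n)\geq C(n)-O(1)\geq \log n-O(1)$ automatically, since $n$ is recoverable from $A\harp n$; and on the other side $A\harp n$ is computable within a recursive time bound from the $m$-bit binary representation of $n$ by extracting $m$, running $\Phi_m(m)$ for $h(m)$ steps, and truncating, giving $C^t(A\harp n)\leq\log n+c$. Both quantities are pinned to $\log n$ within additive constants, so the depth gap is $O(1)$ on infinitely many $n$ with no counting argument and no certification of optimality. You would need to import this device (or an equivalent one) to turn your outline into a proof.
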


\begin{proof}
We prove the contrapositive.
Suppose that $A$ is neither DNR nor high. Let $f(m)$ be (a coding of)
$A \upharpoonright 2^{m+1}$. Because $f\leq_{tt}A$, there are infinitely many
$m$ where $\Phi_m(m)$ is defined and equal to $f(m)$. Hence there
is an $A$-recursive increasing function $g$
such that, for almost every  $m$, $g(m)$ is the time to find an $m' \geq m$
with $A(0)A(1) \ldots A(2^{m'+1}) = \Phi_{m'}(m')$ and to evaluate the
expression $\Phi_{m'}(m')$ to verify the finding. As $A$ is not high,
there is a recursive increasing function $h$ with $h(m) \geq g(m)$ for
infinitely many $m$. Now consider any $m$ where $h(m) \geq g(m)$. Then
for the $m'$ found for this $m$, it holds that $h(m') \geq h(m)$ and
$h(m')$ is also larger than the time to evaluate $\Phi_{m'}(m')$.
Hence $h(m)$ is larger than the time to evaluate $\Phi_m(m)$ for
infinitely many $m$ where $\Phi_m(m)$ codes $A \upharpoonright 2^{m+1}$.

For each such $m$, let $n$ be a number with $2^m \leq n \leq 2^{m+1}$
and $C(n) \geq m$. Starting with a binary description of such an $n$,
one can compute $m$ from $n$ and run $\Phi_m(m)$ for $h(m)$ steps
and, in the case that this terminates with a string $\sigma$ of length
$2^{m+1}$, output $\sigma \upharpoonright n$. It follows from this algorithm
that there is a resource-bounded approximation to $C$ such that
there exist infinitely many $n$ such that, on one hand
$C(A \upharpoonright n) \geq \log(n)$ while on the
other hand $A \upharpoonright n$ can be described in $\log(n)+c$ bits
using this resource bounded description. Hence $A$ is not $O(1) \deepc$. 
\end{proof}

\medskip
\noindent
Since there are incomplete r.e.\ Turing degrees which are high, these are
also not DNR and, by Theorem~\ref{t.high.iff.deep}, they contain sets
which are $0.9\deepc$. Thus the preceeding theorem cannot be improved to
show that ``$O(1)\deepc$ sets are DNR''.

\begin{theorem}\label{t:deep-not-DNR}
There exists a set $A$ such that $A$ is $(1-\varepsilon)n \deepc$
(for any $\varepsilon >0$) but  $A$ is not DNR.
\end{theorem}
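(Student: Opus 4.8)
The plan is to realise $A$ inside a high but Turing-incomplete recursively enumerable degree, so that non-DNR comes for free while highness supplies the slow time bounds needed to force depth of magnitude approaching~$n$. I fix a high incomplete r.e.\ set $E$ (such sets exist by classical recursion theory, and this is exactly the situation invoked in the remark preceding the theorem). Since $E$ is r.e.\ and incomplete, its degree is not DNR by Arslanov's completeness criterion \cite{Arslanov}; as the property of not being DNR is downward closed under $\leq_T$, it suffices to build $A \leq_T E$. Because $E$ is high, I may fix a recursive enumeration $\{E_s\}$ of $E$ whose modulus function $m(n)=\mu s\,(E_s\harp n = E\harp n)$ dominates every total recursive function, and note $m \leq_T E$.

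For the construction I work with the \emph{recursive} block boundaries $b_k = 2^k$ and lengths $\ell_k = b_{k+1}-b_k$. Having defined $A\harp{b_k}$, I let $\sigma_k\in\{0,1\}^{\ell_k}$ be the length-lexicographically least string with $C^{m(b_{k+1})}(\sigma_k \mid A\harp{b_k}) \geq \ell_k - 1$ and set $A\harp{b_{k+1}} = A\harp{b_k}\,\sigma_k$. Such a $\sigma_k$ exists by counting, since fewer than $2^{\ell_k-1}$ strings of length $\ell_k$ admit a description shorter than $\ell_k-1$ runnable within $m(b_{k+1})$ steps, and the search is effective once the single number $m(b_{k+1})$ is known; hence $A \leq_T E$ and $A$ is not DNR. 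Keeping the block boundaries recursive is the key device that removes any circularity between the lengths at which depth is forced and the (non-recursive) slow bounds $m(b_{k+1})$ used to force it.

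The two complexity estimates then run as follows. To reconstruct $A\harp{b_{k+1}}$ with no time bound it suffices to know, for each $j\le k$, the \emph{count} $c_j = |E \cap [0,b_{j+1})|$: enumerating $E$ until $c_j$ elements below $b_{j+1}$ have appeared recovers both $E\harp{b_{j+1}}$ and the settling stage $m(b_{j+1})$, after which the lex-least search reproduces $\sigma_j$. Each $c_j \leq 2^{j+1}$ costs $O(j)$ bits, so $C(A\harp{b_{k+1}}) = O(k^2) = o(b_{k+1})$; encoding the slow computation by a logarithmic-size count, rather than by the necessarily linear-size time bound, is precisely what lets the magnitude approach~$n$. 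For the other direction, given any recursive time bound $t$, domination yields $t(b_{k+1}) \leq m(b_{k+1})$ for almost all $k$, and since $C^s$ is non-increasing in $s$ the defining inequality gives $C^t(\sigma_k \mid A\harp{b_k}) \geq \ell_k - 1$ for almost all $k$. Corollary~\ref{c.sym.inform.}, applied inside the block with advice $a=A\harp{b_k}$, then lower-bounds $C^t(\tau \mid A\harp{b_k})$ by $|\tau|-O(\log \ell_k)$ for every prefix $\tau$ of $\sigma_k$, while the time-bounded symmetry of information of Lemma~\ref{l.sym.inform.} lets me combine the blocks into $C^t(A\harp n) \geq n - o(n)$ for almost all $n$. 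Together with $C(A\harp n)=o(n)$ this gives $C^t(A\harp n)-C(A\harp n) \geq n-o(n) \geq (1-\varepsilon)n$ for almost all $n$ and every $\varepsilon>0$, so $A$ is $(1-\varepsilon)n\deepc$.

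The step I expect to be the main obstacle is the last one: converting the per-block conditional lower bounds into a clean $n-o(n)$ bound on $C^t(A\harp n)$ for \emph{every} $n$ rather than only at block boundaries, while simultaneously controlling the cumulative $O(\log)$ losses incurred by chaining $k$ blocks and the successive degradation of the time bound produced by each application of Lemma~\ref{l.sym.inform.}. The saving grace is that all the bounds so generated form a recursively presented family and are therefore each dominated by $m$, so the defining inequalities survive for almost all $k$; making this bookkeeping uniform in $t$ is the technical heart of the proof. A secondary point to discharge carefully is the modulus-domination characterisation of high r.e.\ degrees used to choose the enumeration $\{E_s\}$.
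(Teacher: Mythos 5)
Your top-level strategy is exactly the paper's: locate a high, non-DNR Turing degree and place a $(1-\varepsilon)n\deepc$ set inside it. The paper's proof is literally two lines -- it cites the existence of a high non-DNR degree \cite{odifreddi} and then invokes Theorem~\ref{t.high.iff.deep}, whose implication $(3)\Rightarrow(2)$ already asserts that every high degree contains a $(1-\varepsilon)n\deepc$ set for every $\varepsilon>0$. Your choice of a high \emph{incomplete r.e.} degree and Arslanov's completeness criterion \cite{Arslanov} is a perfectly good (and arguably more self-contained) way to certify non-DNR-ness, and your observation that non-DNR-ness is downward closed under $\leq_T$ correctly lets you settle for $A\leq_T E$ rather than $A\equiv_T E$. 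Had you stopped there and cited Theorem~\ref{t.high.iff.deep}, you would have reproduced the paper's proof.

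Instead you re-derive the hard direction of Theorem~\ref{t.high.iff.deep} with a variant construction, and that is where the one substantive gap sits. Your block construction, the counting argument for the existence of $\sigma_k$, the $O(k^2)$ upper bound on $C(A\harp b_{k+1})$ via the counts $c_j$, and the chaining of Lemma~\ref{l.sym.inform.} and Corollary~\ref{c.sym.inform.} over finitely many blocks (with finitely many derived time bounds for each fixed $\varepsilon$) all parallel the paper's proof of $(3)\Rightarrow(2)$ and are sound. But the entire lower-bound argument rests on the assertion that a high r.e.\ set $E$ admits a recursive enumeration whose settling function $m(n)=\mu s\,(E_s\harp n=E\harp n)$ \emph{dominates} every total recursive function. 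This is not the same as Martin's high domination theorem, which only gives some dominant $f\leq_T E$; turning such an $f$ into the modulus of a recursive enumeration of $E$ (or of some r.e.\ set in its degree) is a genuine construction requiring the $\Sigma^0_2$-approximation to totality, and you leave it entirely undischarged beyond flagging it as a ``secondary point.'' It is in fact the load-bearing step: the paper's mechanism -- the set $H(\langle i,j\rangle)$ certifying, via the dominating function $g\leq_T A$, which $\Phi_i$ are total time bounds on inputs up to $2^j$, and the resulting bounds $T_n$ -- exists precisely to avoid needing a dominant modulus, at the modest cost of building $B\equiv_T A$ rather than re-enumerating $A$ itself. So: either supply a proof (or precise citation) that high r.e.\ degrees contain r.e.\ sets with dominant settling functions, or replace the modulus device with the paper's $H$-set device, or -- simplest -- just invoke Theorem~\ref{t.high.iff.deep} as the paper does.
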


\begin{proof}
There is a degree which is high but not DNR \cite{odifreddi}.
Thus we can, by Theorem \ref{t.high.iff.deep}, select a set $A$
in this degree which is $(1-\varepsilon)n \deepc$ for every $\varepsilon < 1$.
\end{proof}

\section{$K$-Depth}

\noindent
Bennett's original depth notion is based on prefix free complexity.
He made important connections between depth and truth-table degrees;
In particular he proved that the $O(1)\deepk$ sets are
closed upward under truth-table reducibility, which he called the slow
growth law. 
In the following section we 
pursue Bennett's investigation by studying the Turing degrees of deep sets.
In the first subsection, we investigate the connections
between linear depth and high Turing degrees. We then
look at the opposite end by studying the interactions of  various
lowness notions
with logical depth.

\subsection{Highness and depth coincide}

\noindent
The following result shows that at depth magnitude $\varepsilon n$,
depth and highness coincide on the Turing degrees.
The result holds for both $K$ and $C$ depth.

\begin{theorem}\label{t.high.iff.deep}
For every set $A$ the following statements are equivalent:
\begin{enumerate}
\item The degree of $A$ is $\varepsilon n \deepc$ for some $\varepsilon > 0$.
\item The degree of $A$ is $(1-\varepsilon)n \deepc$ for every $\varepsilon >0$.
\item $A$ is high.
\end{enumerate}
\end{theorem}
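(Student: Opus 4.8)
The plan is to prove the cycle $(2)\Rightarrow(1)\Rightarrow(3)\Rightarrow(2)$. The edge $(2)\Rightarrow(1)$ is immediate: applying $(2)$ with $\varepsilon=\tfrac12$ places a $\tfrac12 n\deepc$ set in the degree of $A$, which is $\varepsilon n\deepc$ for $\varepsilon=\tfrac12$. The two substantial edges are $(3)\Rightarrow(2)$, where highness must build a set of near-maximal depth inside the degree, and $(1)\Rightarrow(3)$, where even a small linear amount of depth must force highness; I expect the latter to be the crux.

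For $(3)\Rightarrow(2)$ I would start from the characterization of highness recalled in the preliminaries and fix an increasing $g\leq_T A$ dominating every recursive function. The target is a set $B\equiv_T A$ whose prefixes are simple in the limit yet incompressible in every recursive time, which is exactly what $(1-\varepsilon)n\deepc$ demands: since $C^t(B\harp n)\leq n+O(1)$, that notion forces $C(B\harp n)\leq\varepsilon n+O(1)$ while $C^t(B\harp n)$ must stay near $n$. I would code $A$ into $B$ at a $g$-controlled rate so that $B\harp n$ is recoverable from a short counter-style description, exactly as the prefixes of $\emptyset'$ are, giving $C(B\harp n)=O(\log n)$, but so that extracting the coded information provably takes more than $g$ steps, hence more than $t(n)$ steps for every recursive time bound $t$. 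A diagonalization of the coded blocks against all programs of length $\leq(1-\varepsilon)n$ run for $g$ many steps then yields $C^t(B\harp n)\geq(1-\varepsilon)n$ almost everywhere. This is the linear-magnitude sharpening of the Juedes--Lathrop--Lutz fact that every high degree contains a deep set \cite{DBLP:journals/iandc/LathropL94}; the work is to verify the two complexity bounds simultaneously with $B\equiv_T A$.

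For $(1)\Rightarrow(3)$ I would argue the contrapositive and try to contradict depth using the infinitely-often domination that a non-high degree offers: if $A$ is not high then every $B\equiv_T A$ is non-high, so every $B$-recursive function is infinitely often bounded by a recursive one. The quantity one wants to exploit is the time needed for the $B$-computable approximation $C^s(B\harp n)$ to come within $\tfrac{\varepsilon}{2}n$ of its limit $C(B\harp n)$: for each recursive time bound $t$, depth keeps $C^t(B\harp n)\geq C(B\harp n)+\varepsilon n$ almost everywhere, so this time dominates $t$, and a recursive bound on it infinitely often would give infinitely many $n$ with $C^h(B\harp n)<C(B\harp n)+\varepsilon n$ for a recursive $h$, contradicting $\varepsilon n\deepc$. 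The main obstacle is that this time is \emph{not} $B$-recursive, because it measures proximity to the limit value $C(B\harp n)$, and from $B$ one obtains arbitrarily good upper approximations $C^s(B\harp n)$ but no nontrivial lower bound on $C$ (lower bounds are the content of $\emptyset'$). This is precisely where the linear magnitude must be used essentially, in contrast to the $O(1)$ case, where Theorem~\ref{t.deep.high.or.dnc} only delivers ``high or DNR.'' My intended device is Corollary~\ref{c.sym.inform.}: a near-optimal, recursively-timed description of $B\harp n$ lifts to a description of $B\harp N$ of length about $C(B\harp n)+(N-n)$, which already beats $C(B\harp N)+\varepsilon N$ once $N<n/(1-\varepsilon)$, so the linear slack absorbs the extension cost and a single good base length yields depth violations on a whole window of nearby lengths. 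The remaining and hardest point is to secure infinitely many base lengths at which a recursively-timed search already reaches near-optimal compression; here I would play non-highness against a carefully chosen $B$-recursive clock and use a counting bound in the spirit of Theorem~\ref{t:counting} to keep the logarithmic and constant overheads negligible against the linear slack, so that depth is genuinely violated without ever computing $C$ exactly.

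Finally I would record the consistency check that makes linearity indispensable: low sets are never high, so by this theorem no low set lies in an $\varepsilon n\deepc$ degree, which is compatible with the low order-deep set constructed later in the paper. The gap between ``high or DNR'' at magnitude $O(1)$ and ``high'' at magnitude $\varepsilon n$ is exactly the room that the linear slack in the argument above is meant to buy.
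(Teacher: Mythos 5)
Your overall architecture coincides with the paper's: $(2)\Rightarrow(1)$ is immediate, and $(3)\Rightarrow(2)$ is carried out in the paper exactly along the lines you sketch --- a set $B\equiv_T A$ built in blocks, where each block $x_n$ is the lexicographically first string incompressible within a time bound $T_n$ (extracted from the dominating function) relative to the part of $B$ already built, so that $C(B\harp 2^{n+1})=O(n^2)$ while $C^t(B\harp j)$ stays near $j$ for every recursive $t$. What your sketch omits there is the transfer of incompressibility from block endpoints to intermediate lengths, which the paper handles with resource-bounded symmetry of information (Lemma~\ref{l.sym.inform.} and Corollary~\ref{c.sym.inform.}), first reaching $\frac23 n$ and then iterating over more blocks to reach $(1-\varepsilon)n$; that is detail rather than a missing idea.

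The genuine gap is in $(1)\Rightarrow(3)$. You correctly isolate the obstacle --- from $A$ one can enumerate ever better upper bounds $C_s(A\harp n)$ but never certify proximity to the true value $C(A\harp n)$, so ``time until the approximation is within $\frac{\varepsilon}{2}n$ of the limit'' is not an $A$-recursive clock --- but you do not supply the idea that removes it, and the devices you name do not address it: Corollary~\ref{c.sym.inform.} spreads one certified violation over a window of nearby lengths, and Theorem~\ref{t:counting} controls overhead, but the problem is certifying even one violation, not multiplying it. The paper's resolution is a pigeonhole/liminf over a finite grid of compression ratios: fix $\delta=1/l<\varepsilon/3$ and let $k\in\{0,1,\ldots,l\}$ be least such that $C(A\harp n)\leq n k\delta$ for infinitely many $n$. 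Then (i) almost every $n$ satisfies $C(A\harp n)> n(k-1)\delta$, a lower bound on $C$ obtained non-uniformly and for free, and (ii) the $A$-recursive clock $g(n)=$ ``time to find some $m\geq n$ together with a program of length at most $mk\delta$ for $A\harp m$'' is total, because the threshold $mk\delta$ is a fixed, known target rather than the uncomputable $C(A\harp m)$. Non-highness yields a monotone recursive $h$ with $h(n)\geq g(n)$ infinitely often, whence $C_{h(m)}(A\harp m)\leq mk\delta\leq C(A\harp m)+m\delta<C(A\harp m)+\varepsilon m$ for infinitely many $m$, refuting $\varepsilon n$-depth of $A$ (and the argument is degree-invariant since highness is). This grid trick is precisely where the linear magnitude is spent --- the loss is one grid cell, $\delta n$ --- which is the quantitative answer to your closing remark on why magnitude $O(1)$ only yields ``high or DNR'' (Theorem~\ref{t.deep.high.or.dnc}) while magnitude $\varepsilon n$ yields highness outright.
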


\begin{proof}
We prove $(1) \Rightarrow (3)$ using the contrapositive: 
Let $\varepsilon>0$ and $l\in\N$ such that $\delta < \varepsilon/3$
with $\delta:= 1/l$. Let $k$ be the limit inferior
of the set $\{0,1,\ldots,l\}$ such that there are infinitely many
$n$ with $C(A \upharpoonright n) \leq n \cdot k \cdot \delta$. Now one can
define, relative to $A$, an $A$-recursive function $g$ such that
for each $n$ there is an $m$ with $n \leq m \leq g(n)$ and
$C_{g(n)}(A \upharpoonright m) \leq m \cdot k \cdot \delta$.
As $A$ is not high, there is a recursive function $h$ with $h(n) > g(n)$
for infinitely many $n$; furthermore, $h(n) \leq h(n+1)$ for all~$n$.
It follows that there are infinitely many $n$ with
$C_{h(n)}(A \upharpoonright n) \leq n \cdot k \cdot \delta$ which
is also at most $n \cdot \delta$ away from the optimal value, hence $A$
is not $\varepsilon \cdot n$ deep, which ends this direction's proof.

Let us show $(3)\Rightarrow (2)$.
Let $\varepsilon > 0$, $A$ be high,   and 
let $g\leq_T A$ be dominating. We construct $B\equiv_T A$ such that
$B$ is $(1-\varepsilon)n \deepc$.

By definition, if $t$ is a time bound and $i$ an index of $t$ then
for every $m\in\N$ $\Phi_i(m)[t(m)]\!\downarrow\, = t(m)$.
Since $g$ is dominating, we have
for almost  every $m\in\N$, $t(m) = \Phi_{i}(m)[g(m)]\!\downarrow$.

We can thus use $g$ to encode all time bounds that are total on all
strings of length less than a certain bound into a set $H$, where
\begin{quote}
$H(\langle i,j\rangle) = 1 $ iff  $\Phi_{i}(m)[g(2^j)]\!\downarrow$ for
all $m\in\{1,2,\ldots,2^j\}$.
\end{quote}
Thus $t$ is a (total) time bound iff for almost every $j$, $H(\langle
i,j\rangle) = 1 $ (where $i$ is an index for $t$).

We have $H\leq_T A$ and we choose the pairing function $\langle \cdot \rangle$
such that $H\upharpoonright n^2+1$ encodes the values
$$
   \{H(\langle i,j\rangle): \ i,j \leq n\}.
$$
Let $n\in\N$ and suppose  $B\upharpoonright 2^n$ is already
constructed. Given $A\upharpoonright n+1$ and $H\upharpoonright n^2+1$,
we construct $B\upharpoonright 2^{n+1}$.
From $H\upharpoonright n^2+1$, we can compute the set $L_n=\{i\leq n:
\  H(\langle i,n\rangle)=1\}$, i.e.,
a list eventually containing all time bounds that are total on strings
of lengths less or equal to $2^n$.
Let $$T_n := \max\{\Phi_i(m): \ i\in L_n, m\leq 2^n\}.$$
Find the lex first string $x_n$ of length $2^n -1$ such that
$$C_{T_n} (x_n \ | \ (B\upharpoonright 2^n) A(n))\geq 2^n.$$ 
Let $ B\upharpoonright [2^n,2^{n+1}-1] :=  A(n) x_n.$
By construction we have $B\equiv_T A$. Also,
$C(B\upharpoonright 2^{n+1}\ | \ H\upharpoonright n^2+1,
A\upharpoonright n+1) \leq^+ C(n)$, i.e.,
$C(B\upharpoonright 2^{n+1})\leq 2n^2$.

Let us prove $B$ is $\frac{2}{3}n \deepc$; we then extend the argument
to show $B$ is $(1-\varepsilon)n \deepc$.
Let $t$ be a time bound. Let $n$ be large enough such that
$t_1,t_2,t_3 \in L_{n-2}$ and $t_1',t_2',t_3',t'_4 \in L_{n}$
where the $t_i$'s  are derived from $t$ as described below.

Let $j$ be such that $2^{n} < j \leq 2^{n+1}$ and $j'=j-(2^{n}-1)$,
i.e., $B\upharpoonright j$ ends
with the first $j'-2$ bits of $x_n$ (One bit is ``lost'' due to the
first bit used to encode $A(n)$).

We consider two cases, first suppose $j'<\log n$. Let $t_1$ be a time
bound (obtained from $t$) such
that $C^t(B\upharpoonright j) \geq^+ C^{t_1} (x_{n-1},B\upharpoonright
2^{n-1})$, where neither the constant nor $t_1$ depends
on $j,n$. Let $t_2$ be derived from $t_1$ using Lemma \ref{l.sym.inform.}.
We have
\begin{align*}
 & C^{t_1} (x_{n-1},B\upharpoonright 2^{n-1}) & \\
        &\geq C^{t_2} (B\upharpoonright 2^{n-1}) + C^{t_2} (x_{n-1}\ |
\ B\upharpoonright 2^{n-1}) - O(\log 2^n)& \\
 &\geq C^{t_2} (B\upharpoonright 2^{n-1}) + C_{T_{n-1}} (x_{n-1}\ | \
B\upharpoonright 2^{n-1}) - O(n) & \text{because $t_2\in L_{n-1}$}\\
 &\geq C^{t_2} (B\upharpoonright 2^{n-1}) + 2^{n-1} - O(n) & \text{by
definition of $x_{n-1}$} \\
 &\geq 2^{n-1} + 2^{n-2} + C^{t_3} (B\upharpoonright 2^{n-2}) - O(n) &
\text{reapplying the argument above} \\
 &\geq \frac{3}{4} 2^n -O(n) > \frac{2}{3}(2^n +j'+1) = \frac{2}{3}  j. & 
\end{align*}
For the second case, suppose $j'>\log n$. We have
\begin{align*}
 C^t(B\upharpoonright j) & \geq  
 C^{t'_1} (x_{n}\upharpoonright j',B\upharpoonright 2^{n}) &\\
 &\geq C^{t'_2} (x_n\upharpoonright j'\ | \ B\upharpoonright 2^n) +
C^{t'_2} (B\upharpoonright 2^{n}) - O(n) &\text{By Lemma
\ref{l.sym.inform.}} \\
 &\geq C^{t'_3} (x_n| \ B\upharpoonright 2^n) -2^n +j' + C^{t'_2}
(B\upharpoonright 2^{n}) - O(n) &\text{by Corollary \ref{c.sym.inform.}}\\
 &\geq C_{T_n} (x_n\ | \ B\upharpoonright 2^{n}) -2^n + j'  + C^{t'_2}
(B\upharpoonright 2^{n})- O(n) &\ \text{because $t'_3\in L_{n}$}\\
 &\geq 2^{n} -2^n + j'  + C^{t'_2} (B\upharpoonright 2^{n})- O(n)
&\text{by definition of $x_n$} \\
 &\geq j'  + C^{t'_4} (x_{n-1},B\upharpoonright 2^{n-1})- O(n) &
\text{same as in the first case}\\
 &\geq j' +\frac{3}{4} 2^n -O(n) &\text{same as in the first case}\\
 &> \frac{2}{3} j &
\end{align*}
Note that each iteration of the argument above yields a $2^{n-k}$ term
($k=1,2,3,\ldots$), therefore for any $\varepsilon >0$,
there is a number $I$ of iterations, such that $B$ can be shown
$(1-\varepsilon)n \deepc$, for all $n$ large enough such
that $t_1,t_2,\ldots,t_{3I} \in L_n$. 
\end{proof}

\begin{acorollary}\label{c:Kalso-holds}
Theorem \ref{t.high.iff.deep} also holds for $K$-depth.
\end{acorollary}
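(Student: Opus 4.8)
The plan is to deduce the corollary from Theorem~\ref{t.high.iff.deep} together with the single observation that $C$ and $K$ differ only by a logarithmic term, which is negligible against a linear depth magnitude. First I would record the two standard inequalities $C(x) \leq^+ K(x)$ and $K(x) \leq^+ C(x) + 2\log|x|$, together with their time-bounded analogues: for any recursive time bound $t$ there is a recursive time bound $s$, computable from $t$, with $C^s(x) \leq^+ K^t(x)$ (prepend a fixed decoder and simulate the prefix-free machine on the plain one, which needs no delimiter since prefix-free programs are self-delimiting) and, in the other direction, $K^s(x) \leq^+ C^t(x) + 2\log|x|$ (prepend a self-delimiting description of the plain program's length and simulate). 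Because both depth notions quantify over \emph{all} recursive time bounds, the only cost of these simulations is a recursive blow-up of the time bound, which is harmless, plus an additive $O(\log|x|)$ in description length.

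Using these, I would prove a conversion principle: for every set $S$ and every constant $c$ with $0 < c \le 1$, if $S$ is $cn\deepc$ then $S$ is $(c-\delta)n\deepk$ for every $\delta>0$, and symmetrically with $C$ and $K$ interchanged. For the forward direction, fix a recursive time bound $t$ and let $s$ be associated to it as above; then
\[
   K^t(S\harp n) - K(S\harp n) \;\geq^+\; C^s(S\harp n) - C(S\harp n) - 2\log n ,
\]
and the right-hand side is at least $cn - 2\log n - O(1) \geq (c-\delta)n$ for almost all $n$, since $S$ is $cn\deepc$ (applied to the time bound $s$). The reverse direction is identical after exchanging the roles of $C$ and $K$ and of $t$ and $s$, now using $C^t(x) \geq^+ K^s(x) - 2\log|x|$ and $C(x) \leq^+ K(x)$.

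Finally I would close the equivalence as a cycle. Implication $(3)\Rightarrow(2)$ of Theorem~\ref{t.high.iff.deep} yields a single $B \equiv_T A$ that is $(1-\varepsilon')n\deepc$ for every $\varepsilon'>0$; applying the conversion with $\delta = \varepsilon/2$ to the exponent $1-\varepsilon/2$ shows this same $B$ is $(1-\varepsilon)n\deepk$ for every $\varepsilon>0$, which is statement~(2) for $K$-depth. Statement~(2) trivially implies statement~(1) for $K$-depth, since $(1-\varepsilon)n \geq \varepsilon n$ for small $\varepsilon$. And if the degree of $A$ is $\varepsilon n\deepk$ via some $B\equiv_T A$, the conversion (with $\delta=\varepsilon/2$) makes $B$ also $(\varepsilon/2)n\deepc$, so statement~(1) for $C$-depth holds and Theorem~\ref{t.high.iff.deep} gives that $A$ is high.

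I expect the only delicate point to be the time-bounded conversions: one must check that simulating one universal machine by the other, and attaching the self-delimiting length prefix, inflates the time bound only by a recursive factor (absorbed by the universal quantifier over recursive time bounds) and the description length only by $O(\log n)$ (absorbed by the arbitrarily small slack $\delta$ in the linear magnitude). Everything else is routine, because no step of the argument perturbs the depth magnitude by more than $o(n)$, so the sharp $(1-\varepsilon)$ form survives intact.
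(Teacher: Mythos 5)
Your proposal is correct and follows essentially the same route as the paper, whose own proof is a one-line appeal to the inequalities $C(x) \leq K(x) \leq C(x) + O(\log |x|)$, observing that a logarithmic discrepancy is absorbed by a linear depth magnitude. Your version merely spells out the details the paper leaves implicit, in particular the time-bounded analogues of these inequalities and the fact that the recursive blow-up of the time bound is harmless because both depth notions quantify over all recursive time bounds.
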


\begin{proof}
Because every set $A$ is $\varepsilon n \deepc$ (for some $\varepsilon
>0$) iff it is $\varepsilon' n \deepk$ for some $\varepsilon' >0$,
since for every $x$, $C(x) \leq K(x) \leq C(x) + O(\log |x|)$.
\end{proof}

\subsection{Depth implies highness or DNR}

\noindent
An analogue of Theorem \ref{t.deep.high.or.dnc} holds for $K$.

\begin{theorem}\label{t.deep.high.or.dnc.k}
Let $A$ be a  $h \deepk$ set for some recursive order $h$.  Then $A$
is high or DNR.
\end{theorem}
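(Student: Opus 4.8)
The plan is to argue by contraposition, following the proof of Theorem~\ref{t.deep.high.or.dnc} as closely as possible. Assuming $A$ is neither high nor DNR, I will exhibit a single recursive time bound $t$ together with infinitely many lengths $n$ at which the $K$-depth $K^t(A\harp n)-K(A\harp n)$ drops below $h(n)$; since $h$ is unbounded this already contradicts $A$ being $h\deepk$. First I let $f(m)$ be a coding of $A\harp 2^{m+1}$, so that $f\leq_{tt}A$. Because $A$ computes no DNR function, $f$ cannot differ from the diagonal $m\mapsto\Phi_m(m)$ on almost all inputs (otherwise a finite patch of $f$ would be a DNR function recursive in $A$), hence $\Phi_m(m)=f(m)$ for infinitely many $m$.

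Next I exploit that $A$ is not high. There is an $A$-recursive increasing $g$ so that $g(m)$ bounds the time needed to locate some $m'\ge m$ with $\Phi_{m'}(m')=f(m')$ and to verify this by evaluating $\Phi_{m'}(m')$; since $A$ is not high, some recursive increasing $H$ satisfies $H(m)\ge g(m)$ infinitely often. Exactly as in Theorem~\ref{t.deep.high.or.dnc} this produces infinitely many \emph{good} $m$ for which $\Phi_m(m)$ halts within $H(m)$ steps and outputs $A\harp 2^{m+1}$. For such an $m$ the whole block $A\harp 2^{m+1}$, and hence every prefix $A\harp n$ with $2^m\le n\le 2^{m+1}$, is computable within a recursive time bound from $m=\lfloor\log n\rfloor$. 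Feeding a time-bounded prefix-free program for $n$ into the constant-size routine ``recover $m$, run $\Phi_m(m)$ for $H(m)$ steps, truncate to length $n$'' yields a recursive time bound $t$ with $K^t(A\harp n)\leq^{+} K^s(n)+O(1)$ for a fixed auxiliary time bound $s$.

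For the matching lower bound I use $K(A\harp n)\geq^{+} K(n)-O(1)$, since the length $n$ is recoverable from $A\harp n$. Thus at every good length the $K$-depth of $A\harp n$ is bounded by the \emph{depth of the length itself}, $K^s(n)-K(n)+O(1)$. To force this quantity below $h(n)$ I will, inside each good interval $[2^m,2^{m+1}]$, select an $n$ whose prefix-free complexity $K(n)$ is as large as Chaitin's counting theorem (Theorem~\ref{t:counting}) permits, so that the unbounded complexity of $n$ meets its time-bounded upper bound up to a constant. This collapses the residual gap, and because $h$ is unbounded it stays below $h(n)$ for all large such $n$, providing the required infinitely many witnesses against $h$-depth.

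The main obstacle is precisely this last step, and it is where the prefix-free setting departs from the plain one. In Theorem~\ref{t.deep.high.or.dnc} the length is described by its raw binary string, which carries no delimiter, so the time-bounded description costs only $\log n+O(1)$ and the gap is an honest $O(1)$. For $K$ the program for $n$ must be self-delimiting, carrying a logarithmic overhead, so the residual gap is genuinely governed by the depth of $n$ (equivalently of $m$). The delicate point will be to guarantee that infinitely many good lengths are themselves shallow enough, i.e.\ that the near-maximal-complexity $n$ furnished by Theorem~\ref{t:counting} can be located inside infinitely many good intervals, so that the overhead is absorbed and the gap falls below the prescribed order $h$. I expect this complexity-matching, rather than the high/DNR combinatorics it is built upon, to be the crux.
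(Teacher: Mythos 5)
There is a genuine gap, and it sits exactly where you suspect. Your reduction of the problem to the quantity $K^{s}(n)-K(n)$ is correct, but the proposed resolution --- picking, inside each good dyadic interval $[2^m,2^{m+1}]$, an $n$ of near-maximal prefix-free complexity via Theorem~\ref{t:counting} --- cannot close it. The maximum of $K(n)$ over $n\in[2^m,2^{m+1})$ is $m+K(m)+O(1)$, while any time-bounded self-delimiting description of such an $n$ costs at least $m+K^{s}(m)-O(1)$ bits; so even for the best $n$ the residual gap is $K^{s}(m)-K(m)+O(1)$, i.e.\ the time-bounded depth of the block index $m$. This merely pushes the problem down one exponential level rather than collapsing it to $O(1)$: the crude bound on the gap is $2\log m+O(1)=2\log\log n+O(1)$, and since you do not get to choose which $m$ are good (they are handed to you by the non-DNR and non-high arguments), you cannot assume $K(m)$ is large. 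Consequently your argument only defeats orders $h$ with $h(n)\gtrsim\log\log n$; for a slowly growing recursive order (say an iterated-logarithm-type $h$) the witnesses you produce have gap $\approx 2\log\log n\gg h(n)$ and no contradiction with $h\deepk$ is obtained. The theorem, however, is claimed for \emph{every} recursive order $h$.

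The paper's fix is not to match complexities inside dyadic blocks but to rescale the block structure by $h$ itself: one sets $f(m)$ to code $A\harp h^{-1}(m)$ where $h^{-1}(m)=\min\{n: h(n)=m\}$, and runs your (otherwise correct) high/DNR combinatorics on these blocks. At a good block boundary $n=h^{-1}(m)$ the time-bounded description of $A\harp n$ consists of a self-delimiting program for $m$ (about $2\log m$ bits) followed by the instructions ``compute $n=h^{-1}(m)$, run $\Phi_m(m)$ for the allotted time, output the result''; hence $K^{t}(A\harp n)\leq 2\log m+O(1)=2\log h(n)+O(1)$, which is eventually below $h(n)$ with no lower bound on $K(A\harp n)$ needed at all. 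The point is that one never describes $n$ itself (which is what forces the depth-of-the-length term into your estimate); one only describes $m=h(n)$, whose logarithmic cost is automatically negligible against $h(n)$. Your first two paragraphs reproduce the paper's combinatorial core faithfully, but without this change of block positions the final step fails for general $h$.
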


\begin{proof}
We prove the contrapositive.
Suppose that $A$ is neither DNR nor high. Let $f(m)$ be (a coding of)
$A \upharpoonright h^{-1}(m)$, where $h^{-1}(m)=\min_n \{h(n)=m\}$.
The rest of the proof follows the proof of Theorem \ref{t.deep.high.or.dnc},
with $2^m$ replaced with $h^{-1}(m)$.
\end{proof}

\medskip
\noindent
As a corollary, we show that in the left-r.e.\ case, depth always
implies highness.

\begin{acorollary}\label{c:l-re-then-high}
If $A$ is left-r.e.\ and $h \deepk$ (for some recursive order $h$)
then $A$ is high.
\end{acorollary}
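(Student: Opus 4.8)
The plan is to prove the contrapositive: if $A$ is left-r.e.\ and not high, then $A$ is not $h\deepk$. I would organise this to mirror the proof of Theorem~\ref{t.deep.high.or.dnc.k}, with one conceptual substitution. There, the hypothesis ``$A$ is not DNR'' was used only to guarantee that a recursive guess (the value $\Phi_m(m)$) agreed with a long prefix of $A$ for infinitely many $m$. For a left-r.e.\ set the monotone rational approximation to $0.A$ supplies correct prefixes for free, so the DNR escape is unnecessary and only non-highness is required.

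First I would dispose of the trivial case: if $A$ is recursive it is not $h\deepk$ (no recursive set is deep), so assume $A$ is non-recursive; then $0.A$ is not a dyadic rational. Fix a recursive nondecreasing sequence of rationals $\beta_s \to 0.A$. Because $0.A$ is not dyadic, once the first $\ell$ bits of $\beta_s$ agree with $A\harp\ell$ they stay correct for all larger $s$. Writing $L_m = h^{-1}(m) = \min\{\ell : h(\ell)\ge m\}$, I define $g(m)$ to be the least stage $s$ with $\beta_s\harp L_m = A\harp L_m$. Then $g\le_T A$ (compute $A\harp L_m$ from the oracle, then search for the stage), and I may take $g$ nondecreasing.

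Now I would invoke non-highness. Since $g\le_T A$ and $A$ computes no dominating function, there is a recursive nondecreasing $H$ with $H(m)\ge g(m)$ for infinitely many $m$; call such $m$ \emph{good}. For a good $m$ the string $A\harp L_m$ is the first $L_m$ bits of $\beta_{H(m)}$, hence computable from $m$ within a recursive time bound. The crucial move is to evaluate depth precisely at the sparse lengths $n=L_m$: a self-delimiting description of the index $m$, of length at most $2\log m + O(1)$, together with fixed code that recomputes $L_m$, $H(m)$ and $\beta_{H(m)}$, prints $A\harp L_m$ in recursively bounded time. Thus there is a recursive time bound $t$ with $K^t(A\harp L_m)\le 2\log m + O(1)$ for all good $m$, while trivially $K(A\harp L_m)\ge 0$. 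Since $h(L_m)\ge m$ by the definition of $L_m$, for all large good $m$ we obtain
$$K^t(A\harp L_m) - K(A\harp L_m) \le 2\log m + O(1) < m \le h(L_m),$$
so the depth inequality fails for this $t$ at infinitely many lengths, and $A$ is not $h\deepk$.

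The step I expect to demand the most care is the bookkeeping around the approximation: checking that the freezing-stage function $g$ is genuinely $A$-recursive (which is exactly where non-dyadicness of $0.A$ enters, to ensure frozen bits remain frozen), and that the reconstruction of $A\harp L_m$ from $m$ fits inside a bona fide recursive time bound in the sense of the preliminaries. The conceptual heart, by contrast, is short: naming a prefix through its \emph{index} costs only $O(\log m)$ bits, whereas the magnitude threshold at that same length is $h(L_m)\ge m$, so the threshold dwarfs the description cost and non-highness alone collapses the depth. This is also what justifies the ``corollary'' label, since the argument is the contrapositive machinery of Theorem~\ref{t.deep.high.or.dnc.k} with the left-r.e.\ approximation substituted for the not-DNR hypothesis.
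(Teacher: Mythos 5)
Your proof is correct, but it takes a genuinely different route from the paper's. The paper argues at the level of degrees and black boxes: it passes from $A$ to its left cut $L(A)$, which is r.e.\ and truth-table equivalent to $A$, transfers depth to $L(A)$ via the slow growth law (Lemma~\ref{l.sgl}), applies Theorem~\ref{t.deep.high.or.dnc.k} to conclude that $L(A)$ is high or DNR, and then kills the DNR horn with Arslanov's completeness criterion \cite{Arslanov} (every r.e.\ DNR degree is high). You instead reopen the proof of Theorem~\ref{t.deep.high.or.dnc.k} and observe that the only role of ``not DNR'' there is to supply infinitely many cheaply-named correct long prefixes of $A$; for a left-r.e.\ set the monotone approximation $\beta_s \to 0.A$ supplies these directly (your freezing argument via non-dyadicity of $0.A$ is the right justification that $g$ is total and $A$-recursive, and $L_m=h^{-1}(m)\geq m$ because $h(n)\leq n$, so the $2\log m+O(1)$-bit description at length $L_m$ indeed undercuts the threshold $h(L_m)\geq m$). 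What the paper's route buys is brevity and modularity -- three citations and four lines -- at the cost of importing Arslanov's theorem and the effectivity clause of the slow growth law. What your route buys is a self-contained argument (modulo the standard non-domination characterisation of non-highness) that explains \emph{why} left-r.e.-ness can substitute for the not-DNR hypothesis, and it avoids the detour through $L(A)$ and the change of order from $h$ to $h'$ entirely. Both are valid; yours is longer but arguably more informative.
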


\begin{proof}
Let $A$ be as above. By definition of  $A$ being left-r.e., the left
cut $L(A)$  of $A$ is r.e.\ and
$L(A)\equiv_{tt} A$. By Lemma \ref{l.sgl},
$L(A)$ is $h' \deepk$ (for some recursive order $h'$).
By Theorem \ref{t.deep.high.or.dnc.k}, $L(A)$ is high or DNR. Since
every r.e.\ DNR set is high, $A$ is high.
\end{proof}

\medskip
\noindent
As a second corollary, we prove that every weakly-useful set
is either high or DNR. A set $A$ is weakly-useful
if there is a time-bound $s$ such that the class of all sets
truth-table reducible to $A$ with this time bound $s$ is not small,
i.e., does not have measure zero within the class of recursive sets;
see \cite{DBLP:journals/iandc/LathropL94} for a precise definition.
In \cite{DBLP:journals/iandc/LathropL94}, it was shown that every
weakly-useful set is $O(1)\deepk$ (even order$\deepk$
as observed in \cite{DBLP:journals/mst/AntunesMSV09}) thus generalising the
fact that $\emptyset'$ is $O(1)\deepk$, since $\emptyset'$ is weakly-useful.

\begin{theorem}[Antunes, Matos, Souto and 
Vit{\'a}nyi\cite{DBLP:journals/mst/AntunesMSV09};
Juedes, Lathrop and Lutz \cite{DBLP:journals/iandc/LathropL94}]\label{t:wudeep}
Every weakly-useful set is order$\deepk$.
\end{theorem}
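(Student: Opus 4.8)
The plan is to argue directly from the measure‑theoretic meaning of weak usefulness, transferring time‑bounded incompressibility of suitable recursive sets to $A$ through the uniform reduction. Fix a recursive time bound $s$ witnessing weak usefulness, so that the class $\mathcal C=\{S\in\mathrm{REC}: S\leq_{tt}A$ within time $s\}$ is not null inside $\mathrm{REC}$. To establish order$\deepk$ I must produce a single recursive order $g$ such that for every recursive time bound $t$ and almost all $n$, $K^t(A\harp n)-K(A\harp n)\ge g(n)$.

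First I would isolate the supply of hard witnesses. For each recursive time bound $T$, the recursive sets that are time‑$T$ compressible infinitely often (those with $K^T(S\harp n)<n-\omega(\log n)$ for infinitely many $n$) form a recursively null subclass of $\mathrm{REC}$, coverable by a recursive martingale built from the time‑$T$ compressor, the capital being controlled by a counting bound in the spirit of Theorem~\ref{t:counting}. Since $\mathcal C$ is non‑null, it therefore contains, for every recursive $T$, a recursive $S$ with $K^T(S\harp n)\ge n-O(\log n)$ for almost all $n$. Any such $S$ must have from‑scratch running time exceeding $T$ at those lengths, since otherwise $S\harp n$ would be computable from $n$ within time $T$ and hence $T$‑compressible to $2\log n+O(1)$ bits. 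This increasing hardness of the witnesses as $T$ grows is exactly the useful‑shortcut phenomenon: $A$ computes, under the fixed bound $s$, recursive sets whose direct computation is arbitrarily expensive.

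Next I would run the transfer. Fix a challenge bound $t$, set $T(n):=t(s(n))+s(n)+n$, and select by the previous step a recursive $S\in\mathcal C$ that is $T$‑incompressible. Because the reduction is uniform of time $s$, the string $S\harp n$ is computable from $A\harp s(n)$ in time $s(n)$, so concatenating a $t$‑time‑optimal program for $A\harp s(n)$ with the reduction code and invoking additivity of $K^{t}$ gives
\[
   K^{\,t(s(\cdot))+s}(S\harp n)\ \le\ K^{t}(A\harp s(n))+O(\log n).
\]
As $t(s(n))+s(n)\le T(n)$ stays below the running time of $S$, the left‑hand side is at least $n-O(\log n)$, whence $K^{t}(A\harp s(n))\ge n-O(\log n)$. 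The matching small bound on the unbounded side is meant to come from recursiveness of the witness rather than from $A$: with unbounded resources $S\harp n$ is describable in $2\log n+O(1)$ bits, and the information placed into $A\harp s(n)$ that forces the large time‑$t$ complexity is, by construction, recoverable once enough time is allowed, so it should not survive in $K(A\harp s(n))$. Carrying this out yields a gap that grows with $n$. I would then spread the gap from the sparse lengths $s(n)$ to almost all lengths, and make it uniform in the challenge, by the device used elsewhere in this paper for turning length‑indexed bounds into an order: set $g(\ell)=\max\{\tfrac12\,\gamma(n): s(n)\le\ell\}$, where $\gamma(n)$ is the gap secured above, using near‑monotonicity of $K$ to propagate the bound and halving to absorb the additive constants, so that a single $g$ serves every $t$ after taking $T$ large in $t$.

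I expect the crux to be precisely the unbounded‑complexity upper bound on the prefixes of $A$, that is, upgrading the time‑$t$ lower bound into an honest depth gap rather than a mere complexity lower bound. The reduction runs only from $S$ to $A$, so $K(A\harp s(n))$ is not directly dominated by $K(S\harp n)$; the argument must instead exploit that arbitrarily hard recursive witnesses all reduce to $A$ under one fixed bound $s$, making the consumed segment of $A$ algorithmically redundant, and then uniformize this redundancy across all challenge time bounds with a single order $g$. Everything else — the measure‑to‑incompressibility step, the additivity bookkeeping for $K^{t}$, and the interpolation to all lengths — is routine once this gap is in place.
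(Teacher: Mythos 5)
The paper itself does not prove this theorem; it is imported from Juedes, Lathrop and Lutz \cite{DBLP:journals/iandc/LathropL94} (and, for the order version, Antunes, Matos, Souto and Vit\'anyi \cite{DBLP:journals/mst/AntunesMSV09}), so your attempt can only be measured against the published argument. Your opening moves are sound: for a fixed recursive time bound $T$ the class of sequences $S$ with $K^T(S\harp n)<n-2\log n$ for infinitely many $n$ is covered by a recursive martingale, so the non-null class $\mathcal C$ must contain, for each $T$, a recursive $S$ whose prefixes are almost everywhere $T$-incompressible, and pushing this through the time-$s$ reduction does give $K^t(A\harp s(n))\ge n-O(\log n)$. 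The fatal problem is exactly the step you flag as the crux, and it cannot be repaired inside your framework. Depth is a statement about the gap $K^t(A\harp m)-K(A\harp m)$, and a one-sided lower bound on $K^t(A\harp m)$ says nothing about it. Take $A=\emptyset'\oplus R$ with $R$ Martin-L\"of random: this set is weakly useful (every recursive set reduces to the $\emptyset'$-half within a fixed quadratic time bound), yet $K(A\harp m)\ge m/2-O(1)$ for all $m$, so the assertion that the information forcing large time-$t$ complexity ``should not survive in $K(A\harp s(n))$'' is false. Worse, if the usefulness bound is $s(n)=n^2$, your transfer yields only $K^t(A\harp m)\ge\sqrt m-O(\log m)$, which sits far below $K(A\harp m)\approx m/2$, so the derived ``gap'' is negative. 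The redundancy one must exploit is not a redundancy in the bits of $A$ but in their timing, and that is invisible to any argument that bounds the complexity of $A$ from one side only.

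The published proof runs the implication in the opposite direction, by contraposition through the effective slow growth law (Lemma~\ref{l.sgl} of this paper, whose uniformity clause is essential). If $A$ were not order$\deepk$, then for suitable $t$ and $c$ one has $K^t(A\harp n)<K(A\harp n)+c$ infinitely often, and the effectivity of the slow growth law pushes this shallowness \emph{down} each time-$s$ truth-table reduction, uniformly in its index, to every recursive $R\le_{tt}A$ within time $s$: each such $R$ fails a finite-level depth condition $K^{t'}(R\harp n)-K(R\harp n)\ge c'$ with parameters computable from $t$, $c$, $s$ and the reduction. One then shows that the recursive sets failing such a condition form a measure-zero subclass of the recursive sets (here the counting bound of Theorem~\ref{t:counting} and the fact that $K(R\harp n)\le K(n)+O(1)$ for recursive $R$ enter), contradicting weak usefulness. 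Note that recursive sets can perfectly well be $(t',c')$-deep for a \emph{fixed} $t'$ even though they are never order-deep; this finite-level depth of the recursive witnesses, transferred upward by the slow growth law, is what your one-sided incompressibility bound fails to capture. Your measure-theoretic step is the right kind of ingredient, but it must be applied to the two-sided gap condition on the recursive witnesses, not to their raw time-bounded complexity.
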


\noindent
It is shown in \cite{DBLP:journals/iandc/LathropL94} that every high
degree contains a weakly-useful set.
Our results show some type of converse to this fact.

\begin{theorem}\label{t:wu-then-hi}
Every weakly-useful set is either high or DNR.
\end{theorem}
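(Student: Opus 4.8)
The plan is to derive this as an immediate consequence of the two preceding results, since the heavy lifting has already been done. First I would invoke Theorem~\ref{t:wudeep}: if $A$ is weakly-useful, then $A$ is order$\deepk$, meaning that $A$ is $h\deepk$ for some recursive order $h$. This is precisely the hypothesis required by Theorem~\ref{t.deep.high.or.dnc.k}.

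Second, I would apply Theorem~\ref{t.deep.high.or.dnc.k} directly to this $h$: since $A$ is $h\deepk$ for a recursive order $h$, it follows that $A$ is high or DNR. Chaining these two implications yields the conclusion.

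The only point to check is that the notion of order$\deepk$ supplied by Theorem~\ref{t:wudeep} is exactly the notion consumed by Theorem~\ref{t.deep.high.or.dnc.k}, namely ``$h\deepk$ for \emph{some} recursive order $h$''. Both statements quantify existentially over the recursive order, so the witnessing order produced by the weak-usefulness argument can be passed unchanged into the highness-or-DNR argument; no uniformity or effectivity of the index for $h$ is needed here, only its existence. Consequently there is no genuine obstacle: the result is a corollary obtained by composing the two theorems, and the substantive work lies entirely in their separate proofs (the compression/redistribution construction behind Theorem~\ref{t:wudeep}, due to Juedes, Lathrop and Lutz together with the order refinement of Antunes, Matos, Souto and Vit\'anyi, and the domination-evading argument behind Theorem~\ref{t.deep.high.or.dnc.k}).

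I would therefore present the proof in two sentences, stating that weak-usefulness gives order$\deepk$ by Theorem~\ref{t:wudeep}, and that order$\deepk$ gives high or DNR by Theorem~\ref{t.deep.high.or.dnc.k}, whence $A$ is high or DNR.
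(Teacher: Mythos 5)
Your proposal is correct and is exactly the paper's own proof: the paper also derives the result by composing Theorem~\ref{t:wudeep} (weakly-useful implies order$\deepk$) with Theorem~\ref{t.deep.high.or.dnc.k} (order$\deepk$ implies high or DNR). Your remark that the existential quantification over the recursive order matches on both sides is a sound sanity check, and nothing further is needed.
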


\begin{proof}
This follows from Theorem \ref{t.deep.high.or.dnc.k},
since every weakly-useful set is order$\deepk$ by Theorem \ref{t:wudeep}.
\end{proof}

\subsection{A low deep set}

\noindent
We showed in Theorem \ref{t.high.iff.deep} that every $\varepsilon
n\deepk$ set is high.
Also Theorem \ref{t.deep.high.or.dnc.k} shows that every order$\deepk$
set is either high or DNR.
Thus one might wonder whether there exists any non-high order$\deepk$ set.
We answer this question affirmatively by showing there exist low
order$\deepk$ sets.

\begin{theorem}\label{t:pawu}
If $A$ has PA-complete degree, then there exists a weakly-useful set
$B\equiv_T A$.
\end{theorem}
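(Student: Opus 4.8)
The plan is to exploit the defining feature of a PA-complete degree, namely that it computes a member of every nonempty $\Pi^0_1$-class (equivalently, by the characterisation recalled in the preliminaries, a Boolean DNR function) \cite{odifreddi}. Using this I would build, computably in $A$, an ``oracle table'' $P \leq_T A$ that makes \emph{every} recursive set truth-table reducible to $A \oplus P$ within one fixed time bound. Setting $B = A \oplus P$ then gives $B \equiv_T A$ for free, so the entire content is to arrange that $P$ renders $B$ weakly-useful.

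The main obstacle is that a PA-complete degree need not be high, so, unlike in the Juedes--Lathrop--Lutz construction of a weakly-useful set inside a high degree, no dominating function is available to tell us at length $n$ which recursive sets have already converged and whose values may then be stored. The key idea that circumvents this is to replace the dominating function by a $\Pi^0_1$ constraint that \emph{pins} the stored values to the correct ones without our ever computing the convergence times. Concretely, I would encode $P$ as a concatenation of blocks $P_0 P_1 \cdots$, where block $P_n$ has length $n+1$ and records guesses $p_{n,0},\dots,p_{n,n}$, and take $\mathcal{P}$ to be the class of all $P$ such that for all $n$ and all $e \leq n$, whenever $\Phi_e(n)$ halts with a value in $\{0,1\}$ one has $p_{n,e} = \Phi_e(n)$. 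The negation of membership is a recursively enumerable event (search over $n$, $e$, a halting computation of $\Phi_e(n)$, and a disagreement with $p_{n,e}$), so $\mathcal{P}$ is a $\Pi^0_1$-class; it is nonempty because filling in the true values where $\Phi_e(n)$ halts and $0$ elsewhere yields a member. Since $A$ is PA-complete, $A$ computes some $P \in \mathcal{P}$.

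For the verification, observe that for any recursive set $X = \Phi_e$ (total and $\{0,1\}$-valued) and any $n \geq e$ the constraint forces $p_{n,e} = X(n)$, and this holds regardless of which member $P \in \mathcal{P}$ was selected by $A$. Hence the truth-table reduction that, on input $n$, computes the (easily located) position of $p_{n,e}$ inside $B = A \oplus P$, queries that single bit, and outputs it, while hardcoding the finitely many values for $n < e$, computes $X$ exactly. I would argue its running time is bounded by a fixed polynomial $s(n)$ that is independent of $e$, so every recursive set lies in the class of sets tt-reducible to $B$ within the one time bound $s$. Since the recursive sets do not form a null class within the recursive sets, this makes $B$ weakly-useful, and $B \equiv_T A$ as noted.

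I expect the only delicate point to be confirming that a \emph{single} time bound $s$ serves all these reductions at once: the reduction for $X = \Phi_e$ must run within $s(n)$ with a constant not depending on $e$, which requires laying out the finitely many hardcoded values for $n < e$ so that, on input $n$, retrieving the relevant bit costs time polynomial in $n$ rather than in $e$. Everything else is routine bookkeeping about the block addressing and the co-r.e.\ definition of $\mathcal{P}$.
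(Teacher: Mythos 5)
Your proof is correct, and at its core it is the same construction as the paper's, just entered through the other standard characterisation of PA-completeness. The paper works directly with a Boolean DNR function $f\leq_T A$: it sets $g(n)=1-f(n)$, so that $g(e)=\Phi_e(e)$ whenever $\Phi_e$ is total and Boolean, and then (implicitly, via an s-m-n/padding step: evaluate $g$ at an index of the machine that ignores its input and computes $\Phi_e(x)$) stores $\Phi_e(x)$ at position $\langle e+1,x\rangle$ of $B$, alongside a copy of $A$ at the positions $\langle 0,x\rangle$. You instead invoke the equivalence between computing a Boolean DNR function and computing a member of every nonempty $\Pi^0_1$-class, and package the completion problem as an explicit $\Pi^0_1$-class of consistent value tables. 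What your route buys is that it makes fully explicit the step the paper compresses into the word ``thus'' (how agreement on the diagonal values $\Phi_e(e)$ yields access to all values $\Phi_e(x)$); what it costs is an appeal to the Jockusch--Soare basis equivalence, which the paper's preliminaries do not actually state (they define PA-completeness only via Boolean DNR functions), so strictly speaking you are importing one extra standard theorem. From that point on the two arguments coincide: every total $\{0,1\}$-valued $\Phi_e$ is truth-table reducible to $B$ by a single query at an easily computed address, the lookup-table trick on small inputs gives one fixed time bound $s$ working for all $e$ simultaneously, and weak usefulness follows because the class of recursive sets is not null within itself. The delicate point you flag about the uniformity of $s$ in $e$ is precisely what the paper dispatches with its ``lookup table on small inputs'' remark, so your treatment of it matches the intended one.
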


\begin{proof}
Let $f\leq_T A$ be a Boolean DNR function and let $g(n):= 1-f(n)$.
It follows that if $\Phi_e$ is Boolean and total, then $g(e)=\Phi_e(e)$.
One can thus encode $g$ into a set $B\leq_T A$ such that 
for every $e$ such that $\Phi_e$ is Boolean and total and for every $x$,
$B(\langle e+1,x \rangle) = \Phi_e(x)$. One can also encode $A$ into $B$
(for example, $B(\langle 0,x\rangle)=A(x)$) so that $A\equiv_T B$.
Thus for every recursive set $L$ there exists $e$ such that  for every
string $x$, we have $L(x)=B(r_e(x))$,
where $r_e(x)=\langle e,x\rangle$ is computable within $s(n)=n^2$
steps (by using a lookup table on small inputs).
It follows that every recursive set is truth-table reducible to $B$
within time $s(n)=n^2$.
Because the class of recursive sets does not have measure zero within
the class of recursive sets \cite{DBLP:journals/iandc/LathropL94},
it follows that $B$ is weakly-useful.
\end{proof}

\begin{acorollary}\label{c:padeep}
If $A$ has PA-complete degree, then there exists an order$\deepk$  set
$B\equiv_T A$.
Furthermore, there is a $\Pi^0_1$-class only consisting of
order$\deepk$ sets.
\end{acorollary}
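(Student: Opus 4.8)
The plan is to derive both assertions from the machinery already in place, chiefly Theorem~\ref{t:pawu} and Theorem~\ref{t:wudeep}. For the first assertion, suppose $A$ has PA-complete degree. By Theorem~\ref{t:pawu} there is a weakly-useful set $B \equiv_T A$, and by Theorem~\ref{t:wudeep} every weakly-useful set is order$\deepk$; hence this $B$ is order$\deepk$ and Turing equivalent to $A$, which is exactly what is required. So the content lies entirely in the second assertion.

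For the second assertion I would make the construction of Theorem~\ref{t:pawu} uniform in a Boolean DNR oracle and then push a whole $\Pi^0_1$-class through it. Recall that the Boolean DNR functions form a nonempty $\Pi^0_1$-class $D \subseteq \{0,1\}^{\omega}$, the standard example of a $\Pi^0_1$-class consisting of PA-complete degrees. The key observation is that the passage from a Boolean DNR function $f$ to the weakly-useful set built in Theorem~\ref{t:pawu} can be realised by a single total recursive (truth-table) functional: letting $g = 1-f$ and using the $s$-$m$-$n$ theorem to fix a recursive $s$ with $\Phi_{s(e,x)}$ constantly equal to $\Phi_e(x)$, one sets
$$
   B_f(\langle e+1, x\rangle) = 1 - f(s(e,x)) \quad\text{and}\quad B_f(\langle 0,x\rangle)=0 .
$$
Each bit of $B_f$ depends on a single bit of $f$, so $f \mapsto B_f$ is a total tt-functional, and I would take $P = \{B_f : f \in D\}$ to be its image.

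It then remains to check two things. First, $P$ is a $\Pi^0_1$-class: since $D$ is a $\Pi^0_1$-class, $\{0,1\}^{\omega}$ is compact, and $f \mapsto B_f$ is a total recursive functional, the image is effectively closed (at the level of trees one projects the paired computable tree for the graph and invokes König's lemma), and because the functional is total on all of $D$ the class $P$ consists of exactly the sets $B_f$ with no spurious extra members. Second, every member of $P$ is order$\deepk$: for any $f \in D$, the defining property $g(e)=\Phi_e(e)$ for Boolean total $\Phi_e$ (which uses only that $f$ is Boolean DNR) gives $B_f(\langle e+1,x\rangle)=g(s(e,x))=\Phi_e(x)$ whenever $\Phi_e$ is Boolean and total, so exactly as in Theorem~\ref{t:pawu} every recursive set is truth-table reducible to $B_f$ within time $n^2$; hence each $B_f$ is weakly-useful and, by Theorem~\ref{t:wudeep}, order$\deepk$.

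The main obstacle I anticipate is bookkeeping rather than any conceptual difficulty: one must verify that the construction of $B_f$ genuinely invokes \emph{only} the Boolean-DNR property of $f$, so that it applies uniformly to every member of $D$ and not merely to one fixed PA-complete oracle, and that the resulting functional is total on all oracles, so that the continuous image of the compact class $D$ is genuinely a $\Pi^0_1$-class. Both points are routine once the $s$-$m$-$n$ encoding displayed above is in place.
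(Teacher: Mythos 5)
Your proposal is correct and takes essentially the same approach as the paper, whose entire proof is the remark that the corollary follows from Theorems~\ref{t:wudeep} and~\ref{t:pawu}. Your elaboration for the second assertion --- applying the construction of Theorem~\ref{t:pawu} uniformly to the $\Pi^0_1$-class of Boolean DNR functions and taking the image under the resulting total tt-functional to get a $\Pi^0_1$-class of weakly-useful, hence order$\deepk$, sets --- is exactly the intended argument, just spelled out in more detail than the paper provides.
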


\noindent
This corollary follows from Theorems \ref{t:wudeep} and \ref{t:pawu}.
Recall that a set $A$ is said low for $\Omega$ iff Chaitin's $\Omega$ is 
Martin-L\"of random relative to $A$; a set $A$ has superlow degree if
its jump $A'$ is truth-table reducible to the halting problem.
Well-known basis theorems \cite{downey:low.omega,joso} yield the following
corollary.

\begin{acorollary} \label{c:low-deep}
For each of the properties low, superlow, low for $\Omega$
and hyperimmune-free, there exists an order$\deepk$ set which
also has this respective property.
\end{acorollary}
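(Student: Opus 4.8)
The plan is to read Corollary~\ref{c:padeep} as supplying a single nonempty $\Pi^0_1$-class $P$ every member of which is order$\deepk$, and then to pull out of $P$, one property at a time, a member that additionally witnesses the desired degree-theoretic feature. The point is that membership in $P$ already guarantees order$\deepk$-ness, so the whole corollary reduces to four applications of classical basis theorems. First I would note that $P$ is genuinely nonempty: the class of Boolean DNR functions is a nonempty $\Pi^0_1$-class, and the passage $f \mapsto B$ built in the proof of Theorem~\ref{t:pawu} is uniform and Turing-degree preserving, so its image inside $\{0,1\}^\omega$ is again a nonempty $\Pi^0_1$-class consisting solely of (weakly-useful, hence) order$\deepk$ sets.

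Next I would invoke, for each of the four properties separately, a basis theorem asserting that every nonempty $\Pi^0_1$-class contains a member of that kind, and apply it to $P$. Concretely, the low basis theorem and the hyperimmune-free basis theorem of Jockusch and Soare \cite{joso} produce members of $P$ of low and of hyperimmune-free degree respectively; the superlow basis theorem (the refinement of the low basis theorem giving $B' \leq_{tt} \emptyset'$ rather than merely $B' \leq_T \emptyset'$) produces a superlow member; and the basis theorem of \cite{downey:low.omega} produces a member of $P$ that is low for $\Omega$. In each case the set extracted lies in $P$, hence is order$\deepk$, and simultaneously carries the required property, which is exactly the assertion of the corollary.

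I expect no genuine obstacle here beyond bookkeeping, since the corollary is essentially a packaging of Corollary~\ref{c:padeep} with off-the-shelf basis theorems. The only things that truly need checking are that $P$ is a nonempty $\Pi^0_1$-class (immediate from Corollary~\ref{c:padeep}) and that each cited basis theorem is stated for \emph{arbitrary} nonempty $\Pi^0_1$-classes rather than, say, only for the class of Boolean DNR functions or only for $\emptyset'$-definable classes. As all four basis theorems of \cite{downey:low.omega,joso} are formulated at that level of generality, applying them to $P$ requires no extra argument, and the corollary follows.
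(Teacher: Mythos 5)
Your proof is correct and uses the same ingredients as the paper --- Corollary~\ref{c:padeep} together with the basis theorems of \cite{joso,downey:low.omega} --- but the order in which they are applied differs slightly from the written proof. The paper (which only spells out the low case) first picks a low set $A$ of PA-complete degree \cite{odifreddi}, then invokes the first part of Corollary~\ref{c:padeep} to obtain an order$\deepk$ set $B \equiv_T A$, and concludes because lowness is a degree property; for that route to cover all four cases one must also check that superlowness, lowness for $\Omega$ and hyperimmune-freeness are invariant under Turing equivalence (superlowness is, via the jump theorem $A \equiv_T B \Rightarrow A' \equiv_{tt} B'$, but this deserves a remark). You instead apply the four basis theorems directly to the $\Pi^0_1$-class of order$\deepk$ sets supplied by the ``furthermore'' clause of Corollary~\ref{c:padeep}, which sidesteps the degree-invariance question entirely and is, if anything, cleaner. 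One small quibble: your stated reason that the image of the class of Boolean DNR functions under the map of Theorem~\ref{t:pawu} is again a $\Pi^0_1$-class (``uniform and Turing-degree preserving'') is not the right one --- degree preservation is irrelevant; what matters is that the map is a total recursive (indeed truth-table) functional, so that the image of a $\Pi^0_1$-class is $\Pi^0_1$ by compactness. But you do not actually need to reprove this, since Corollary~\ref{c:padeep} already asserts the existence of such a class and you may simply cite it.
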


\begin{proof}
There exists low sets $A$ of PA-complete degree \cite{odifreddi}.
By Theorem \ref{c:padeep} there exists an  order$\deepk$ set $B\equiv_T A$ .
Since $A$ is low it follows that $B$ is low.
\end{proof}

\medskip
\noindent
\noindent

The reason one uses PA-complete sets instead of merely Martin-L\"of random sets
(which also satisfy all basis theorems),
 is that Martin-L\"of random sets are not
weakly-useful; indeed, it is known that they are not even $O(1)\deepk$.
This stands in contrast to the following result.

\begin{acorollary}
There are two Martin-L\"of random sets $A$ and $B$ such that
$A \oplus B$ is order$\deepk$.
\end{acorollary}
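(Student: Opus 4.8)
The plan is to keep $A$ and $B$ individually random while forcing their join to recover a fixed order$\deepk$ set through a truth-table reduction, and then to invoke the Slow Growth Law. First I fix an order$\deepk$ set $C$; the most convenient choice is $C=\emptyset'$, which is weakly-useful and hence order$\deepk$ by Theorem~\ref{t:wudeep}. (Any order$\deepk$ set would do, e.g.\ a member of the $\Pi^0_1$-class from Corollary~\ref{c:padeep}, but $\emptyset'$ keeps the argument self-contained.)

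Next I choose $A$ to be Martin-L\"of random relative to $C$ --- equivalently, when $C=\emptyset'$, a $2$-random set. Such sets form a class of measure one, so one exists, and in particular $A$ is MLR. I then define $B$ by the bitwise exclusive-or, $B(n)=A(n)\oplus C(n)$ for all $n$. The key point is that the map $X\mapsto X\oplus_{\mathrm{bit}}C$ is a $C$-computable, measure-preserving involution of $\{0,1\}^{\omega}$, so it preserves randomness relative to $C$; since $A$ is $C$-random, $B$ is $C$-random as well, and therefore MLR. Thus both $A$ and $B$ are Martin-L\"of random.

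It remains to see that the join is deep. From the two halves one recovers $C$ bit by bit, since $(A\oplus B)(2n)=A(n)$ and $(A\oplus B)(2n+1)=B(n)$ give $C(n)=(A\oplus B)(2n)\oplus(A\oplus B)(2n+1)$, a fixed Boolean function of two bits and hence a genuine truth-table reduction $C\leq_{tt}A\oplus B$. As $C$ is order$\deepk$, Lemma~\ref{l.sgl} yields a recursive order $h'$ with $A\oplus B$ being $h'\deepk$, i.e.\ $A\oplus B$ is order$\deepk$, as required. (Consistently with Theorem~\ref{mlrdeepc}, $A\oplus B$ is then \emph{not} MLR: $A$ and $B$ are not relatively random, so van Lambalgen's theorem fails for the pair.)

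The main subtlety --- and the reason one cannot simply take two unrelated randoms, whose join would itself be MLR and thus shallow --- is that $A$ and $B$ must be made highly correlated so that their join recovers the powerful set $C$, while each half stays random on its own; XORing through $C$ achieves exactly this. The one step needing care is the randomness of $B$: because $C=\emptyset'$ is far from random, preservation of randomness under the XOR map genuinely requires $A$ to be random \emph{relative to} $C$ rather than merely MLR, which is why $2$-randomness of $A$ is used. Everything else reduces to the standard fact that a $C$-computable measure-preserving automorphism of Cantor space preserves $C$-randomness, together with the upward closure of depth under $\leq_{tt}$ supplied by the Slow Growth Law.
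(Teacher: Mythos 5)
Your proof is correct, but it takes a genuinely different route from the paper's. The paper invokes the result of Barmpalias, Lewis and Ng \cite{BLN10} that every PA-complete degree is the join of two Martin-L\"of random degrees, chooses such a join $A \oplus B$ that is moreover hyperimmune-free, and then uses Theorem~\ref{t:pawu} to place a weakly-useful set Turing-below $A \oplus B$; hyperimmune-freeness upgrades that reduction to truth-table, so $A \oplus B$ is itself weakly-useful and hence order$\deepk$ by Theorem~\ref{t:wudeep}. You instead run the classical XOR trick: take $A$ $2$-random, set $B(n)=A(n)\oplus\emptyset'(n)$, observe that both halves are Martin-L\"of random (bitwise XOR with a fixed $\emptyset'$-computable sequence is an $\emptyset'$-computable measure-preserving involution, so it preserves randomness relative to $\emptyset'$ -- and here $2$-randomness of $A$ is genuinely needed, as you note), recover $\emptyset'\leq_{tt}A\oplus B$ from the two interleaved bits, and finish with the Slow Growth Law (Lemma~\ref{l.sgl}) applied to the order$\deepk$ set $\emptyset'$. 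Your argument is more elementary and self-contained, needing only the existence of $2$-randoms, the weak usefulness of $\emptyset'$ and Lemma~\ref{l.sgl}, whereas the paper's argument buys extra information: its join is hyperimmune-free and of PA-complete degree (in particular it need not compute $\emptyset'$), while your join necessarily lies above $\mathbf{0}'$ and is therefore high. One cosmetic point: your parenthetical aside appeals to Theorem~\ref{mlrdeepc}, which concerns $C$-depth; the fact you actually want there -- that no Martin-L\"of random set is $O(1)\deepk$ -- is Bennett's, quoted in the paper just before the corollary. This does not affect the proof itself.
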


\begin{proof}
Barmpalias, Lewis and Ng \cite{BLN10} showed that every PA-complete
degree is the join of two Martin-L\"of random degrees; hence there are
Martin-L\"of random sets $A,B$ such that $A \oplus B$ is a hyperimmune-free
PA-complete set. Thus, by Theorem~\ref{t:pawu} there is a weakly-useful
set Turing reducible to $A \oplus B$ which, due to the hyperimmune-freeness,
is indeed truth-table reducible to $A \oplus B$. It follows that $A \oplus B$
is itself weakly-useful and therefore order$\deepk$ by
Theorem~\ref{t:wudeep}.
\end{proof}

\subsection{No $K$-trivial is $O(1)\deepk$}

\noindent
A key property of depth is that ``easy'' sets should not be deep.
Bennett \cite{b:bennett88} showed that no recursive set is deep.
Here we improve this result by observing that no $K$-trivial set is deep.
It  follows easily from equivalent characterisations of $K$-triviality (see \cite{downey:book,nies:book}),
but our proof is self-contained.
As we will see this result is close to optimal. 

\begin{theorem}\label{t:ktshallow}
No $K$-trivial set is $O(1)\deepk$.
\end{theorem}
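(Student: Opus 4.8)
The plan is to refute $O(1)\deepk$ directly: I will produce a single recursive time bound $t''$ and a constant $c^*$ with $K^{t''}(A\harp n)\leq K(A\harp n)+c^*$ for infinitely many $n$. Since $O(1)\deepk$ demands that for \emph{every} constant $c$ and every recursive $t$ one has $K^t(A\harp n)-K(A\harp n)\geq c$ for almost all $n$, exhibiting such $t''$ and $c^*$ contradicts $(c^*+1)\deepk$ and hence $O(1)\deepk$. The mechanism is that $K$-triviality forces each prefix $A\harp n$ to lie among a \emph{bounded} list of length-$n$ strings of near-minimal complexity (this is where Theorem~\ref{t:counting} enters), and along a suitable infinite set of lengths this short list can be recovered by a time-bounded machine.

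First I would record the standard fact that there is a recursive time bound $t$ and a constant $c$ with $K^t(n)\leq K(n)+c$ for infinitely many $n$; let $B$ be the infinite set of such $n$. (Intuitively, appending the raw bits of a string to a fast program for its length, and descending recursively on the length, makes the shortest description fast along infinitely many lengths; I would establish this as a preliminary lemma.) For $n\in B$, the number $n$ itself has a program of length $\leq K(n)+c$ halting within $t(n)$ steps. Next comes the counting step: since $A$ is $K$-trivial there is a constant $d$ with $K(A\harp n)\leq K(n)+d\leq K^t(n)+d$, so $A\harp n$ always lies in $S_n:=\{\sigma\in\{0,1\}^n: K(\sigma)\leq K^t(n)+d\}$. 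On $B$ the threshold $K^t(n)+d$ is at most $K(n)+c+d$, so Theorem~\ref{t:counting} (with $r=n-c-d$) gives $|S_n|\leq 2^{c+d+c_0}=O(1)$. Let $k$ be the limit inferior of $|S_n|$ over $n\in B$; this is a finite number, and there are infinitely many \emph{good} $n\in B$ where $|S_n|=k$ exactly.

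The technically delicate part — the main obstacle — is to recover $S_n$ by a \emph{total} recursive time bound. I would define $t'(n)$ as the least stage $s\geq t(n)$ at which either (A) some program of length $<K^t(n)-c$ for $n$ appears within $s$ steps, or (B) at least $k$ strings of length $n$ have a halting program of length $\leq K^t(n)+d$ within $s$ steps. The point is that this search halts for every $n$: if $n\notin B$ then $K(n)<K^t(n)-c$, so (A) eventually fires; if $n\in B$ then (A) can never fire (there $K(n)\geq K^t(n)-c$), but $|S_n|\geq k$ for all but finitely many $n\in B$ by choice of $k$, so (B) fires, and the finitely many exceptions are absorbed by hardcoding $t'$ on them. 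On a good $n$, (B) halts exactly when all $k$ members of $S_n$ are found, and since the stage-$s$ approximation never underestimates $K$, no spurious strings appear; thus the discovered list is precisely $S_n$, which contains $A\harp n$.

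Finally I would assemble the description. For a good $n$, a program consisting of a length-$K^t(n)$ program for $n$ together with the $O(1)$-bit index of $A\harp n$ within the discovered list $S_n$ computes $A\harp n$ within some recursive time bound $t''$ (run the length-$K^t(n)$ program, then simulate the $t'(n)$-search, then select the indexed string). Hence, for infinitely many $n$,
$$K^{t''}(A\harp n)\;\leq\;K^t(n)+O(1)\;\leq\;K(n)+O(1)\;\leq\;K(A\harp n)+O(1),$$
where the last inequality recovers $n$ as the length of $A\harp n$. Taking $c^*$ to be the resulting constant contradicts $A$ being $O(1)\deepk$, completing the argument.
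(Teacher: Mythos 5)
Your proposal is correct and follows essentially the same route as the paper's proof: both isolate the infinitely many lengths $n$ with $K^t(n)\leq K(n)+O(1)$, bound the number of near-minimal-complexity strings of those lengths via Theorem~\ref{t:counting}, and use the limit inferior of that count to turn the search for the short list containing $A\harp n$ into a total recursive time bound. Your escape clause (A) plays exactly the role of the co-r.e.\ approximation $M_s$ in the paper's definition of $f$, so the two arguments coincide in all essentials.
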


\begin{proof}
%
Let $A$ be $K$-trivial and $c\in\N$ such that $\forall n\in\N$,
$K(A\harp n)\leq K(n) +c$. Let $d$ be such that
for every string $x$, $K(x)\geq K(|x|) -d$ and let $g(n)=n^2$.
There exists a constant $d'$ such that the set $M= \{n\in\N: \ K^g(n)
\leq K(n) + d'\}$ is infinite (see \cite{downey:book} p. 139). 
Note that $M$ is co-r.e., i.e., there exists uniformly recursive
approximations $M_1\supseteq M_2 \supseteq \ldots \supseteq M$ of $M$.
Let
$c'= \liminf_{n\in M} |\{\sigma\in \{0,1\}^n: \ K(\sigma)\leq K^g(n) + c\}|$.
By Theorem \ref{t:counting} (with $r=n+K(n)-K^g(n)-c$), $c' < \infty$.
Consider the function $$f(n) = \min_s \{n\not\in M_s \text{ or there
exist } c' \text{ strings }  \sigma\in \{0,1\}^n \text{ with }
K_s(\sigma) \leq K^g(n) +c \}.$$
By modifying $f$ on the finitely many values before the liminf is
reached, $f$ is recursive.
Wlog $f$ is bounded by a time bound which we also denote $f$.
We have $\exists^{\infty} n\in M$ such that $K^f(A\harp n) \leq K(n) +
c+d'$ thus for each of these infinitely many $n$'s
we have $K^f(A\harp n) - K(A\harp n) \leq K(n) + c +d' -K(n) + d =
c+d+d'$, i.e., $A$ is not $O(1)\deepk$.
\end{proof}

\medskip
\noindent
Call a set $A$ ultracompressible if for every recursive order $g$
and all $n$, $K(A\harp n) \leq^+ K(n)+g(n)$.
The following theorem shows that our result is close to optimal.

\begin{theorem}[Lathrop and Lutz \cite{DBLP:journals/iandc/LathropL99}]
There is an ultracompressible set $A$
which is $O(1)\deepk$.
\end{theorem}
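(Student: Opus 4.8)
The plan is to construct an ultracompressible set $A$ that is simultaneously $O(1)\deepk$ by a careful diagonalization/coding argument. The two requirements pull in opposite directions: ultracompressibility demands that $K(A\harp n)$ stay essentially as low as $K(n)$ (beating every recursive order $g$), while $O(1)\deepk$ demands that for every recursive time bound $t$ there be a fixed constant gap $K^t(A\harp n)-K(A\harp n)$ eventually, meaning the resource-bounded complexity must \emph{fail} to track the unbounded complexity. The reconciliation is that both conditions concern $A\harp n$ only through its relationship to $n$: ultracompressibility asks the \emph{unbounded} $K$ to be near $K(n)$, while depth asks the \emph{time-bounded} $K^t$ to be larger by a constant. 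So I would build $A$ so that each prefix $A\harp n$ carries almost no genuine information (it is computed from $n$ plus a tiny recursively-growing amount of advice that an unbounded search recovers for free), yet recovering that advice within any prescribed recursive time bound is impossible.

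Concretely, I would lay out $A$ in blocks, say the $k$-th block occupying positions in an interval $[n_k, n_{k+1})$ for a fast-growing recursive sequence $n_k$. On the $k$-th block I plant a string $\sigma_k$ whose defining property is that it is the first string (in some search) that is \emph{hard} for all time bounds indexed below $k$, in the spirit of the $x_n$ construction in the proof of Theorem~\ref{t.high.iff.deep}: $\sigma_k$ is chosen so that $K^{t_i}(\sigma_k)$ is large for each time bound $t_i$ with $i\le k$, guaranteeing the constant depth gap once $k$ exceeds the index of any fixed $t$. The crucial trick for ultracompressibility is that $\sigma_k$ is nonetheless the \emph{minimizer} of an effective search, so $K(\sigma_k)$ is only $K(k)+O(1)$ under unbounded computation, and by making the blocks grow fast and spending only a slowly-growing budget $g(n)$ of description bits per block, one can arrange $K(A\harp n)\le^+ K(n)+g(n)$ for every recursive order $g$ of one's choosing. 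The content of $A$ up to $n$ is thus describable by $n$ together with the block index and a short pointer, matching $K(n)$ up to a sub-$g$ overhead.

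The key steps, in order, would be: (i) fix the recursive block boundaries $n_k$ growing fast enough that the per-block information cost, summed, stays below any given recursive order; (ii) for each $k$, define $\sigma_k$ by an unbounded search for a string defeating the time bounds $t_1,\dots,t_k$, exactly as in the highness construction, and verify by a counting argument (as in Theorem~\ref{t:counting}) that such a $\sigma_k$ exists; (iii) prove the depth lower bound: for any recursive time bound $t$ with index $m$, once $k\ge m$ the planted hardness of $\sigma_k$ forces $K^t(A\harp n)-K(A\harp n)\ge c$ for a fixed $c$ and almost all $n$, giving $O(1)\deepk$; (iv) prove ultracompressibility: exhibit the short unbounded description of $A\harp n$ from $n$, the block index, and the $g$-bounded pointer data, yielding $K(A\harp n)\le^+ K(n)+g(n)$ for every recursive order $g$.

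The main obstacle will be step (iv) done uniformly in $g$: ultracompressibility is a single set property quantifying over \emph{all} recursive orders $g$, so the construction cannot be tailored to one $g$. The resolution is to make the per-block description cost genuinely tiny and independent of any particular $g$ — so tiny that its cumulative growth is dominated by \emph{every} unbounded nondecreasing recursive function — which forces the block lengths $n_{k+1}-n_k$ to grow extremely fast and forces $\sigma_k$ to be specifiable by a pointer of length $o(g(n))$ for all $g$ simultaneously. Balancing this against step (iii), where the depth gap must survive for every time bound, is the delicate part: one must ensure the diagonalization against time bounds at stage $k$ is performed with a search whose \emph{output} remains cheaply describable without unbounded computation leaking into the time-bounded complexity. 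I expect the citation to Lathrop and Lutz~\cite{DBLP:journals/iandc/LathropL99} handles precisely this tension, and my reconstruction would mirror their block-coding with the hardness search substituted for their original deep-set core.
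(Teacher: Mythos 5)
The paper does not actually prove this statement; it imports it verbatim from Lathrop and Lutz, so there is no in-paper proof to compare against and your sketch has to stand on its own. It has two genuine gaps. First, step (i) cannot be carried out with \emph{recursive} block boundaries $n_k$. By Theorem \ref{t:ktshallow} an $O(1)\deepk$ set is not $K$-trivial, so the excess $K(A\harp n)-K(n)$ is unbounded; ultracompressibility then forces the positions $m_j=\min\{n: K(A\harp n)\geq K(n)+j\}$ to dominate \emph{every} recursive function (otherwise pick a recursive order $g$ with $g(m_j)<j-c$ infinitely often and the inequality $K(A\harp n)\leq^+ K(n)+g(n)$ fails). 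If each block injects even one bit of irreducible information at recursively placed boundaries, the excess reaches $j$ at a recursively bounded position, a contradiction. So the information-injection points must be spaced out faster than every recursive function, i.e.\ non-recursively; your closing remark that the blocks must ``grow extremely fast'' is incompatible with your own step (i), and this is precisely why the Lathrop--Lutz construction is a dynamic, injury-style argument rather than a static block coding. Second, the claim that $K(\sigma_k)\leq K(k)+O(1)$ because $\sigma_k$ is ``the minimizer of an effective search'' is unjustified: the search must first determine which indices $i\leq k$ code time bounds that converge on all inputs up to the block end, and this finite conjunction of halting statements is r.e.\ but not decidable, so $\sigma_k$ is not computable from $k$. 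In Theorem \ref{t.high.iff.deep} this is exactly the point where the high oracle's dominating function is used; without it you must pay roughly $\log k$ extra advice bits per block (e.g.\ the number of convergent indices), which makes the budget problem above strictly worse.

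Two smaller issues: $O(1)\deepk$ requires the gap $K^t(A\harp n)-K(A\harp n)$ to exceed \emph{every} constant almost everywhere, not a single fixed $c$ as in your step (iii), so the planted hardness must grow with $k$; and you never treat prefixes $A\harp n$ that end in the middle of a block, which is where the symmetry-of-information machinery (Lemma \ref{l.sym.inform.} and Corollary \ref{c.sym.inform.}) is indispensable in the analogous analysis of Theorem \ref{t.high.iff.deep}. The overall intuition --- plant strings that are cheap for unbounded computation but expensive for every time bound, at very sparse positions --- is the right one, but the sketch as written does not survive the effectivity and budget constraints it sets for itself.
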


\begin{theorem}[Herbert \cite{herbertdiss}]
There is a set $A$ which is not $K$-trivial but which satisfies that
for every $\Delta^0_2$ order $g$ and all $n$,
$K(A\harp n) \leq^+ K(n)+g(n)$.
\end{theorem}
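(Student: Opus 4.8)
The plan is to work directly with the excess function $E(n)=K(A\harp n)-K(n)$. Being $K$-trivial means exactly that $E$ is bounded, so I need $E$ to be \emph{unbounded} (non-triviality) while still satisfying $E(n)\leq^+ g(n)$ for every $\Delta^0_2$ order $g$ (the compression requirement). The first step is to notice that both demands are really a single combinatorial condition on the jump sequence $q(m)=\min\{n:E(n)\geq m\}$. Unboundedness of $E$ just says every $q(m)$ is finite. For a $\Delta^0_2$ order $g$, the bound $E(n)\leq^+ g(n)$ forces $g(q(m))\geq m-c_g$, i.e.\ $q(m)\geq g^{-1}(m-c_g)$; conversely, since for every increasing $\Delta^0_2$ function $\beta$ the map $n\mapsto|\{m:\beta(m)\leq n\}|$ is a $\Delta^0_2$ order whose inverse is essentially $\beta$, the compression requirement \emph{for all} $\Delta^0_2$ orders is equivalent to asking that $q$ dominate \emph{every} $\Delta^0_2$ function. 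So I would begin by fixing a strictly increasing sequence $q(0)<q(1)<\ldots$ dominating all $\Delta^0_2$ functions; such a sequence is recursive in $\emptyset''$. Note this already shows $A$ cannot be $\Delta^0_2$: if it were, $E$ and hence $q$ would be $\emptyset'$-computable and could not dominate all $\Delta^0_2$ functions.

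Next I would build $A$ as a mostly-zero sequence carrying information only in short blocks placed at the markers $q(m)$. The essential design choice is that each block does \emph{double duty}. First, it stores $O(1)$ \emph{navigation} bits: the finitely many $\emptyset'$-answers that a fixed unbounded algorithm needs in order to regenerate $q(m{+}1)$ from $q(m)$. This is the device that breaks the apparent impossibility below, because although the markers are $\emptyset''$-level and cannot be located computably from $n$, a halting program supplied with the correct $\emptyset'$-answers as advice can re-run the search for them. Second, each block stores $O(1)$ \emph{incompressibility} bits, chosen by a counting argument so that the prefix at $q(m)$ cannot be compressed below $K(q(m))+\Omega(m)$. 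Between markers $A$ is identically $0$. The upper bound on $K(A\harp n)$ then comes from an explicit self-delimiting program: concatenate the shortest program $n^*$ for $n$ with the stored blocks; a decoder runs $n^*$ to recover $n$, then regenerates markers $q(1),q(2),\ldots$ using the stored navigation bits, consuming one block per marker and stopping as soon as a marker exceeds $n$, and finally writes the incompressibility bits at their marker positions with zeros elsewhere. Since the total stored length is $O(m)$ with $m=|\{j:q(j)\leq n\}|$, this yields $K(A\harp n)\leq K(n)+O(m)$.

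For the verification I would argue the two halves separately. Compression: because $q$ dominates every $\Delta^0_2$ function, for each $\Delta^0_2$ order $g$ the count $m=|\{j:q(j)\leq n\}|$ satisfies $m\leq^+ g(n)$, and after arranging the per-marker block length to be a small constant one gets $K(A\harp n)\leq^+ K(n)+g(n)$, as required. Non-triviality: using Theorem~\ref{t:counting}, at each marker I can select the incompressibility bits so that, even given $n$ and all the navigation data, the prefix $A\harp q(m)$ has complexity at least $K(q(m))+\Omega(m)-O(1)$; hence $E(q(m))\to\infty$, so $A$ is not $K$-trivial. The counting theorem is exactly what guarantees that incompressible choices exist among the available blocks after the universal machine (including the contribution of my own decoder) has been fixed, with the overhead absorbed into the additive constant.

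The hard part will be making the two roles of each block coexist. The navigation bits must be cheap enough — amortized $O(1)$ fresh $\emptyset'$-answers per marker — that the stored length stays below every slow $\Delta^0_2$ order, which constrains how the $\emptyset''$-sequence $q$ may be defined (each $q(m{+}1)$ should be pinned down by an $O(1)$-described $\Sigma^0_1(\emptyset')$ event). Simultaneously the incompressibility bits must remain genuinely incompressible \emph{given} the navigation bits and $n$, so that the excess actually rises rather than being explained away by the stored advice. The tension is real: the markers provably have $K(q(m))-K(m)$ unbounded, so any layout that lets $A\harp n$ reveal a marker's position leaks that complexity into the prefix and destroys compression at simple lengths $n$ just past $q(m)$; the dual-use blocks are precisely what keep each prefix within $O(m)$ bits of a simple string. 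Getting the block length, the marker sparsity, and the counting constants to line up simultaneously is the technical heart of the construction, and is where Herbert's argument does its real work.
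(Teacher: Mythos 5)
The paper does not actually prove this statement; it is imported verbatim from Herbert's dissertation, so there is no internal argument to compare yours against and your proposal must stand on its own. Its opening reformulation is sound: with $E(n)=K(A\harp n)-K(n)$ and $q(m)=\min\{n: E(n)\geq m\}$, the conjunction ``$E$ unbounded and $E(n)\leq^+ g(n)$ for every $\Delta^0_2$ order $g$'' is indeed equivalent to ``$q$ is total and dominates every $\Delta^0_2$ function,'' which correctly places $A$ outside $\Delta^0_2$. The genuine gap is the step you call \emph{navigation bits}. You need a computable decoder, given a program for $n$ plus $O(1)$ stored bits per marker, to regenerate the markers $q(1),\ldots,q(m)$; but these markers sit at the $\emptyset''$ level precisely because $q$ dominates all $\Delta^0_2$ functions. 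Storing the truth values of finitely many $\emptyset'$-questions does not let a halting machine recover the witnesses: a $\Sigma^0_1(\emptyset')$ event is a $\Sigma^0_2$ event, and knowing that it holds gives no computable handle on where its witness lies, while simulating a $\emptyset'$-computation outright requires the answers to \emph{all} of its oracle queries, whose number is not $O(1)$ per marker. If $O(1)$ stored bits per step really did suffice for a $\emptyset'$-machine to pass from $q(m)$ to $q(m+1)$, then $q$ would be $\Delta^0_2$ modulo a sublinear advice stream, and one has to explain why this does not collapse the domination requirement. The standard way out (exploiting that the \emph{numerical magnitude} of $q(m)$ already dominates the settling time of $\emptyset'$ on small arguments, so the decoder can read off a correct finite piece of $\emptyset'$ from the position of the previous marker) is exactly the idea your sketch is missing, and without it the architecture cannot be instantiated. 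You concede as much in your final paragraph, which defers the ``technical heart'' to Herbert.

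There is a second, smaller gap in the non-triviality half. Your appeal to Theorem~\ref{t:counting} does not go through incrementally: at stage $m$ you are choosing among only $2^{O(1)}$ extensions (the new block), whereas the counting theorem permits as many as $2^{\Omega(m)}$ strings of length $q(m)$ to have complexity below $K(q(m))+\Omega(m)$, so nothing guarantees that any of your few candidates escapes the compressible set and the invariant $E(q(m))=\Omega(m)$ is not maintained by a greedy choice. One would instead have to run a conditional counting argument (complexity of the new block given the already-fixed prefix and the navigation data), which is a different and more delicate statement than the one you cite. In summary, the proposal is a reasonable blueprint that correctly isolates where the difficulty lies, but both load-bearing steps --- cheap computable navigation to $\emptyset''$-level markers, and per-block incompressibility --- are asserted rather than established, and the first faces a concrete obstruction as written.
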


\noindent
It would be interesting to know whether such sets as found by Herbert
can be $O(1)\deepk$. The result of Herbert is optimal,
Csima and Montalb\'an \cite{csima-montalaban:min-pair-K-deg}
showed that such sets do not exist when using $\Delta^0_4$ orders
and Baartse and Barmpalias \cite{baartse-barmpalias} improved this
non-existence to the level $\Delta^0_3$. We also point to related
work of Hirschfeldt and Weber \cite{HW12}.

\begin{theorem}[Baartse and Barmpalias \cite{baartse-barmpalias}]
There is a $\Delta^0_3$ order $g$ such that a set $A$ is $K$-trivial iff
$K(A\harp n) \leq^+ K(n) + g(n)$ for all $n$.
\end{theorem}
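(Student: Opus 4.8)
The \emph{only if} direction holds for every order and carries no content: if $A$ is $K$-trivial then $K(A\harp n)\leq K(n)+O(1)\leq^+ K(n)+g(n)$, since an order takes only nonnegative values. So everything reduces to producing a single $\Delta^0_3$ order $g$ whose associated bound forces $K$-triviality. Writing $h_A(n)=K(A\harp n)-K(n)$ for the gap function (which is bounded below by $-O(1)$, since $K(n)\leq K(A\harp n)+O(1)$), I would work with the contrapositive of the \emph{if} direction: for every non-$K$-trivial $A$, equivalently every $A$ with $\limsup_n h_A(n)=\infty$, I must ensure $h_A(n)>g(n)$ for infinitely many $n$. Thus $g$ is required to be an order that is nevertheless overtaken infinitely often by the gap function of \emph{every} non-$K$-trivial set.

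This is exactly what Herbert's theorem rules out at the second level: his non-$K$-trivial set satisfies $h_A\leq^+ g$ for every $\Delta^0_2$ order $g$, so no $\Delta^0_2$ order is ever overtaken by it and the characterization breaks. The role of the extra jump is to let $g$ ``dip low'' at exactly the lengths where a record of $h_A$ occurs, even when an adversarial $A$ is built so that its records are extremely sparse. The whole difficulty is the tension between $g\to\infty$ and ``$g$ slower than the slowest unbounded gap'': a single order must win this race simultaneously against all non-$K$-trivial $A$.

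The quantitative handle is Chaitin's counting theorem (Theorem~\ref{t:counting}): the number of length-$n$ strings $\sigma$ with $K(\sigma)\leq K(n)+m$ is at most $2^{m+O(1)}$, so the bound $K(A\harp n)\leq K(n)+g(n)$ pins $A\harp n$ inside a family of at most $2^{g(n)+O(1)}$ candidate strings at each length. Since $K$ is $\Delta^0_2$, the counts $N(n,m)=|\{\sigma:|\sigma|=n,\ K(\sigma)\leq K(n)+m\}|$ are $\emptyset'$-computable, while the liminf thresholds governing them (as in the proof of Theorem~\ref{t:ktshallow}) require a second limit, so deciding when these have stabilized, i.e.\ when it is ``safe'' to raise $g$ without admitting a fresh non-$K$-trivial set below the bound, is a $\emptyset''$-event. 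I would therefore define $g$ by a movable-marker construction relative to $\emptyset''$: hold $g$ at its current value across every length where the approximations still permit a gap record to be hiding, and increment only once $\emptyset''$ certifies that the relevant counts have settled. The ``raise only when certified'' rule keeps $g$ unbounded yet as slow as possible, and the double limit places $g$ in $\Delta^0_3$.

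The step I expect to dominate the work is the calibration in this construction: proving that the $\emptyset''$-certified increment rule yields a $g$ that is simultaneously (i) genuinely unbounded, (ii) overtaken infinitely often by the gap function of every non-$K$-trivial set, and (iii) obtainable with only two jumps rather than the three used in the Csima--Montalb\'an argument. Establishing (ii) is the crux, since it must hold uniformly against adversaries that deliberately delay their gap records; and matching Herbert's lower bound, so as to confirm that the certification genuinely cannot be carried out below $\emptyset''$, is the technical heart of Baartse and Barmpalias' improvement and the precise point at which a naive single-jump construction collapses.
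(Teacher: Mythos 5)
The paper does not prove this statement at all: it is quoted verbatim as a known result of Baartse and Barmpalias, cited from the literature, so there is no in-paper proof to compare against. Judged on its own terms, your proposal is a plan rather than a proof, and the decisive step is missing. You correctly dispose of the trivial direction and correctly identify that everything rests on building one $\Delta^0_3$ order $g$ that defeats every non-$K$-trivial set, but the construction you sketch (``raise $g$ only when $\emptyset''$ certifies that the relevant counts have settled'') is never pinned down: you do not say what finite, $\emptyset''$-decidable condition is being certified, and the condition you gesture at --- ``safe to raise $g$ without admitting a fresh non-$K$-trivial set below the bound'' --- quantifies over all non-$K$-trivial sets and is not on its face an arithmetical event of any bounded complexity. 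Reducing that quantification to something $\emptyset''$-decidable is precisely the content of the theorem, and you explicitly defer it (``the step I expect to dominate the work,'' ``the crux''). The counting bound from Theorem~\ref{t:counting} cannot carry this load by itself: it confines $A\harp n$ to $2^{g(n)+O(1)}$ candidates at each length, but it gives no control across lengths, and Herbert's theorem (quoted just above this statement in the paper) shows that a non-$K$-trivial set can thread through such small candidate sets for \emph{every} $\Delta^0_2$ order. Your sketch names this obstruction but does not say what, concretely, the third level of the arithmetical hierarchy buys that overcomes it; the actual Baartse--Barmpalias argument goes through the cost-function/decanter machinery for $K$-triviality, not a counting argument.

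There is also a smaller but genuine error in your setup of the contrapositive. The negation of $K(A\harp n) \leq^+ K(n)+g(n)$ is not ``$h_A(n)>g(n)$ for infinitely many $n$'' but the stronger statement that $h_A(n)-g(n)$ is unbounded above (for every constant $c$ there is an $n$ with $h_A(n)>g(n)+c$). Arranging only that $h_A$ exceeds $g$ infinitely often would not refute the $\leq^+$ bound, so any construction calibrated to your stated target would prove less than the theorem claims. This needs to be corrected before the movable-marker strategy can even be formulated properly.
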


\section{Infinitely Often Depth and Conditional Depth}

\noindent
Bennett observed in \cite{b:bennett88}
that being infinitely often Bennett deep is meaningless, because
all recursive sets are infinitely often deep.
A possibility for a more meaningful notion of  infinitely often depth,
is to consider 
a depth notion where the length of the input is given as an advice. We
call this notion i.o.\ depth.

\begin{adefinition}
A set $A$ is i.o.\ $O(1) \deepk$ if for every $c \in\N$ and for every
time bound $t$ there are infinitely many $n$ satisfying
$K^{t}(A\harp n\ | \ n) - K(A\harp n\ | \ n) \geq c.$
\end{adefinition}

\noindent
If we replace $K$ with $C$ in the above definition, we call the
corresponding notion i.o.\ $O(1) \deepc$.
The fact that all recursive sets are infinitely often deep in Bennett's
approach does no longer hold for i.o.\ depth as defined above.

\begin{alemma}
Let $A$ be recursive. Then $A$ is neither i.o.\ $O(1) \deepc$
nor i.o.\ $O(1) \deepk$.
\end{alemma}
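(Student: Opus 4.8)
The plan is to show that for a recursive set $A$, the conditional complexities $K^t(A\harp n \mid n)$ and $K(A\harp n \mid n)$ differ only by a constant for every time bound $t$, and symmetrically for $C$, which directly contradicts the i.o.\ depth requirement. The key observation is that once the length $n$ is supplied as advice, a recursive set $A$ admits a uniform short description: since $A$ is recursive, there is a fixed program that on input $n$ computes $A\harp n$, and this computation runs in some recursive time bound $s(n)$. The crucial point is that the \emph{description} of $A\harp n$ given $n$ is of constant length, namely the index of the algorithm computing $A$, independent of $n$.

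First I would fix the recursive algorithm for $A$ and let $s$ be a recursive time bound for producing $A\harp n$ from $n$. Then I would observe that $K(A\harp n \mid n) \leq^+ O(1)$ and $C(A\harp n \mid n) \leq^+ O(1)$, since the constant-size program ``run the algorithm for $A$ on the advice $n$'' suffices. Next, I would handle the time-bounded side: for the relevant time bound $t$ appearing in the definition of i.o.\ depth, I would compare it with $s$. If $t$ already dominates $s$ almost everywhere, then $K^t(A\harp n \mid n) \leq^+ O(1)$ as well, using the same constant-size program, and the gap $K^t - K$ is bounded by a constant for almost all $n$, violating the i.o.\ condition. The subtlety is that an arbitrary time bound $t$ in the definition need not dominate $s$.

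To address this, I would exploit that the definition of i.o.\ depth quantifies over \emph{all} time bounds $t$; to refute i.o.\ depth it suffices to exhibit, for the given $A$, a single constant $c$ and a single time bound $t$ for which only finitely many $n$ satisfy the depth inequality. Concretely, I would simply choose $t$ to be the recursive time bound $s$ itself (or a recursive majorant of it that is a valid time-bound function in the sense of the Preliminaries). Then $K^s(A\harp n \mid n) \leq^+ O(1)$ for all $n$ via the fixed algorithm, while $K(A\harp n \mid n) \geq 0$ trivially, so $K^s(A\harp n \mid n) - K(A\harp n \mid n) \leq^+ O(1)$ for every $n$. Hence there is no constant $c$ for which infinitely many $n$ give a gap of at least $c$ with this particular $t = s$, so $A$ fails to be i.o.\ $O(1)\deepk$; the identical argument with $C$ in place of $K$ shows $A$ is not i.o.\ $O(1)\deepc$.

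I expect the main obstacle to be a careful reading of the quantifier structure in the definition of i.o.\ depth: being i.o.\ deep requires the depth inequality to hold infinitely often \emph{for every} $c$ and \emph{for every} time bound $t$, so to negate it I only need to produce \emph{one} pair $(c,t)$ that fails, namely the constant bound and the recursive time bound realizing the recursive algorithm for $A$. The only technical care needed is to ensure the chosen $t$ is a legitimate time-bound function (recursive order satisfying the self-printing condition from the Preliminaries), which is achieved by replacing $s$ by a standard recursive order majorizing it; the constant absorbing the algorithm index and the overhead of running it is independent of $n$, which is exactly what closes the argument.
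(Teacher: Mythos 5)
Your proposal is correct and follows essentially the same route as the paper's proof: fix a time bound $t$ majorizing the running time of the algorithm computing $A$, observe that $C^t(A\harp n\mid n)$ (resp.\ $K^t$) is then bounded by a constant for all $n$, and conclude that the gap to the unbounded complexity never exceeds that constant, so no pair $(c,t)$ witnessing i.o.\ depth exists. Your extra care about the quantifier structure and about replacing $s$ by a legitimate time-bound function is sound and only makes explicit what the paper's ``wlog $A$ is recursive in time $t$'' compresses.
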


\begin{proof}
Let $A$ be recursive and $t$ be a time bound.
Wlog $A$ is recursive in time $t$, i.e., for every $n\in\N$ we have
$C^t(A\harp n\ | \ n) \leq c$ for some constant $c$, thus
$\forall n\ C^{t}(A\harp n\ | \ n) - C(A\harp n\ | \ n) < c$.
The $K$ case is similar.
\end{proof}

\medskip
\noindent
The following shows that very little computational power is needed to
compute an i.o.\ deep set.

\begin{theorem}\label{t:iodeep-hypimmune}
\begin{enumerate}
\item There is a $\Pi^0_1$-class such that every member is
      i.o.\ $\varepsilon n \deepc$ for all $\varepsilon < 1$.
      In particular there is such a set of hyperimmune-free degree.
      Furthermore, every hyperimmune Turing degree contains such a set.
\item Every nonrecursive many-one degree contains an
      i.o.\ $O(1) \deepc$ set.
\item If $A$ is neither recursive nor DNR, then $A$
      is i.o.\ $O(1) \deepc$.
\end{enumerate}
\end{theorem}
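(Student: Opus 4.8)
The plan is to establish all three parts around a single ``deep-point'' gadget: a block of length $L$ whose content $\sigma$ has small plain complexity $C(\sigma)$ but large time-bounded complexity $C^t(\sigma)$ for every recursive time bound $t$, at infinitely many blocks. Concretely I fix a fast-growing recursive block structure $a_0<a_1<\cdots$ with $L_k=a_{k+1}-a_k$, and I arrange that for every time bound $t$ there are infinitely many $k$ for which the $k$-th block content $\sigma_k$ satisfies $C^t(\sigma_k\mid a_{k+1})\ge(1-\varepsilon)L_k$ while $C(\sigma_k\mid a_{k+1})$ stays small. Since $C(A\harp a_{k+1}\mid a_{k+1})$ and $C^t(A\harp a_{k+1}\mid a_{k+1})$ differ from the corresponding block quantities only by the (cheaply described) earlier blocks, such a set is i.o.\ $\varepsilon n\deepc$; letting $\varepsilon\to 0$ and shrinking the depth to $O(1)$ handles the $O(1)$ versions needed in (2) and (3).

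For the hyperimmune clause of (1) I would realize this gadget directly. Let $X$ be of hyperimmune degree and let $p\le_T X$ escape every recursive function infinitely often. Put into block $k$ the lexicographically least string $\sigma_k$ of length $L_k$ with $C^{p(k)}(\sigma_k)\ge(1-\varepsilon)L_k$; such a string exists by the counting theorem and is found from $p(k)$ by exhaustive search, so $C(\sigma_k\mid k)\le^+ C(p(k))$ is small, while by construction no program of length $<(1-\varepsilon)L_k$ outputs $\sigma_k$ within $p(k)$ steps. For any fixed time bound $t$ the recursive function $k\mapsto t(a_{k+1})$ is exceeded by $p$ at infinitely many $k$, and at each such $k$ every time-$t$ computation is a time-$p(k)$ computation, so $C^t(\sigma_k)\ge(1-\varepsilon)L_k$ there. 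Encoding $X$ sparsely into the resulting $B$ gives $B\equiv_T X$, which is therefore i.o.\ $\varepsilon n\deepc$.

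For the $\Pi^0_1$-class clause the escaping function must be removed, since a hyperimmune-free member computes no such $p$. The idea is to make the per-block incompressibility threshold a \emph{free coordinate} of the class rather than an $X$-computed value: I would take $P$ to be the set of sequences in which, for every $k$, the $k$-th block content avoids the values that the universal machine compresses ``too quickly,'' a co-r.e.\ condition that keeps $P$ nonempty (by the counting theorem each forbidden set is too small to exhaust the block) and closed. A path through $P$ selects, per block, content whose \emph{shortest} program is necessarily slow, the slowness being a property of the string's time complexity rather than an $X$-computable function; hence the member need not be hyperimmune, and because the halting time of a shortest program is not member-computable, the hyperimmune-free basis theorem applies to extract a hyperimmune-free member that is still i.o.\ $\varepsilon n\deepc$. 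The delicate point, and the step I expect to be the main obstacle, is calibrating this co-r.e.\ pruning so that it simultaneously (i) leaves the tree infinite and (ii) forces, on \emph{every} path and for \emph{every} time bound $t$, infinitely many blocks that are time-$t$-incompressible yet plain-compressible; reconciling ``small $C$'' with ``large $C^t$ for all $t$'' inside one nonempty $\Pi^0_1$ class is exactly where the naive fixed recursive schedule fails.

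For (2) I would take any nonrecursive $A$ and build $B\equiv_m A$ by combining a sparse copy of $A$ (to secure $A\le_m B$ and $B\le_m A$, as in the coding of Theorem~\ref{t:manyone-deg-deepC}) with the deep-point gadget on the intervening coordinates, using the nonrecursiveness of $A$ to defeat each time bound infinitely often. For (3) I would argue the contrapositive, paralleling Theorem~\ref{t.deep.high.or.dnc}: assuming $A$ is not i.o.\ $O(1)\deepc$, fix $c$ and a time bound $t$ with $C^t(A\harp n\mid n)\le C(A\harp n\mid n)+c$ for almost all $n$, and assuming also that $A$ is not DNR, show $A$ is recursive. A reduction $f\le_T A$ coding prefixes $A\harp n_e$ yields, by non-DNR-ness, infinitely many self-verifying agreements $\Phi_e(e)=A\harp n_e$ at which $C(A\harp n_e\mid n_e)=O(1)$; the failure of i.o.\ depth then upgrades these to time-$t$ descriptions of length $O(1)$, placing each $A\harp n_e$ in a uniformly computable list of at most $2^{O(1)}$ candidates. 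The main obstacle in (3) is verification without the oracle: turning these infinitely many consistent short fast descriptions into a genuine recursive decision procedure for $A$, which I would handle by invoking the recursion theorem to make the agreements carry checkable tags, so that the true prefixes are identified along a single consistent branch.
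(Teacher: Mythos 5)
Your overall architecture (an incompressible-in-bounded-time but plainly compressible block gadget, plus a contrapositive for (3)) points in the right direction, but each of the three parts has a concrete gap. In the hyperimmune construction for (1), you define $\sigma_k$ as the lexicographically least string of length $L_k$ with $C^{p(k)}(\sigma_k)\geq(1-\varepsilon)L_k$ and claim $C(\sigma_k\mid k)\leq^+ C(p(k))$ is small; it is not. The value $p(k)$ of an escaping function carries no complexity bound, and worse, at stages $k$ where $p(k)$ happens to exceed the halting times of all programs of length below $(1-\varepsilon)L_k$, your $\sigma_k$ is the least string of \emph{plain} complexity $\geq(1-\varepsilon)L_k$, so the ``small $C$'' half of the gadget fails exactly where you need it, and nothing forces the escape points of $p$ to avoid this. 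The paper avoids the problem by indexing block $I_n$ with $n=\langle e,k\rangle$ and taking a \emph{canonical} $\tau$ computed by a fixed partial recursive procedure from $n$ alone (run $\varphi_e$ up to $b_n$, then search); this gives $C(\tau\mid n)=O(1)$ unconditionally, and the escaping function is used only as a timeout telling the constructor whether it can afford to wait for $\varphi_e(b_n)$, not as the time bound defining $\tau$. The same device resolves the obstacle you correctly flag for the $\Pi^0_1$-class: the paper demands $C^t(\sigma\tau)\geq b_n-2a_n-2$ for \emph{all} $2^{a_n}$ possible prefixes $\sigma$ simultaneously, which the counting argument permits because $b_n=(2+a_n)^2\gg a_n$; this makes $\tau$ independent of the path below $a_n$, so pinning it down once $\varphi_e$ converges is a co-r.e.\ pruning that keeps the class nonempty while giving every path, for every total $\varphi_e$, infinitely many blocks with $C^t\geq b_n-3a_n$ and $C\leq a_n+O(1)$. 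Without this prefix-uniformity your pruning cannot be calibrated, as you suspected.

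For (2) and (3) the gaps are more than calibration. In (2), ``using the nonrecursiveness of $A$ to defeat each time bound'' is not a mechanism: a nonrecursive many-one degree need not compute an escaping or DNR function, so no incompressible gadget content can be extracted from $A$, while content that is uniformly computable cannot supply i.o.\ depth against its own computation time. The paper instead proves the contrapositive by a majority-counting argument on the sparse coding $B\subseteq\{4^n:n\in\N\}$: if no prefix of $B$ is ever selected as time-$t$-incompressible, then a majority of the lengths $m\in[4^n,4^{n+1}]$ carry time-$t$-compressible prefixes, only $2^{O(1)}$ candidate codes $\sigma\in\{0,1\}^{n+1}$ can achieve such a majority, hence $C(\tau\mid|\tau|)=O(1)$ for \emph{almost all} prefixes $\tau$ of $B$, which forces $B$ recursive. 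In (3), your choice of contrapositive (not i.o.\ deep and not DNR implies recursive) stalls exactly where you say it does: infinitely many prefixes lying in uniformly computable lists of $2^{O(1)}$ candidates does not yield a decision procedure, because nothing identifies the true candidate, and the recursion-theoretic ``checkable tags'' are not supplied. The paper takes the other contrapositive (not i.o.\ deep and not recursive implies DNR degree): nonrecursiveness guarantees that the $A$-recursive search for an $m$ with $C^t(A\harp m\mid m)\geq n+c$ terminates, and the resulting $f\leq_T A$ with $C(f(n))\geq n$ witnesses DNR degree by the Kjos-Hanssen--Merkle--Stephan characterisation. You should switch to that direction; the one you chose requires an identification step that is not available.
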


\begin{proof}
This result is obtained by splitting the natural numbers recursively into
intervals $I_n = \{a_n,\ldots,b_n\}$ such that $b_n = (2+a_n)^2$.
Now one defines the $\Pi^0_1$-class such that for each
$n = \langle e,k \rangle$ where $t = \varphi_e$ is defined up to $b_n$,
a string $\tau \in \{0,1\}^{b_n-a_n+1}$ is selected such that for all
$\sigma \in \{0,1\}^{a_n}$, $C^t(\sigma\tau) \geq b_n-2a_n-2$
and then it is fixed that all members $A$ of the $\Pi^0_1$-class have
to satisfy $A(x) = \tau(x-a_n)$ for all $x \in I_n$.
Since there are $2^{b_n-a_n+1}$ strings $\tau$ and for each program of
size below $b_n-2a_n-2$ can witness that only $2^{a_n}$ many
$\tau$ are violating $C^t(\sigma\tau) \geq |\tau|-|\sigma|$ for some
$\sigma \in \{0,1\}^{a_n}$, there will be less than
$2^{b_n-a_n+1}-2^{b_n-a_n}$ many $\tau$ that get disqualified and so
the search finds such a $\tau$ whenever $\varphi_e$ is defined
up to $b_n$. Hence, for every total $t = \varphi_e$, there are
infinitely many intervals $I_n$ with $n$ of the form $\langle e,k \rangle$
such that on these $I_n$, $C^t(A(0)A(1)\ldots A(b_n)\,|\,n) \geq
C^t(A(0)A(1)\ldots A(b_n)) - \log(n) \geq  b_n-3a_n$
and $C(A(0)A(1)\ldots A(b_n)|n) \leq a_n+c$ for a constant $c$,
as the program only needs to know how $A$ behaves below $a_n$
and can fill in the values of $\tau$ on $I_n$. So the complexity
improves after time $t(b_n)$ from $b_n-3a_n$ to $a_n$ and,
to absorb constants, one can conservatively estimate the
improvement by $b_n-5a_n$.
By the choice of $a_n,b_n$, the ratio $(b_n-5a_n)/b_n$ tends
to $1$ and therefore every $A$ in the $\Pi^0_1$-class is
$\varepsilon n\deepc$ for every $\varepsilon < 1$.
Note that there are hyperimmune-free sets inside this
$\Pi^0_1$-class, as it has only nonrecursive members.

Furthermore, one can see that the proof also can be adjusted
to constructing a single set in a hyperimmune Turing degree
rather than constructing a full $\Pi^0_1$-class. In that case
one takes some function $f$ in this degree which is not dominated
by any recursive function and then one permits for each
$n = \langle e,k\rangle$ the time $\varphi_e(b_n)$ in the case
that $\varphi_e(b_n) < f(k)$ and chooses $\tau$ accordingly
and one takes $\tau = 0^{b_n-a_n+1}$ in the case that $\varphi_e$
does not converge on all values below $b_n$ within time
$f(k)$ otherwise. This construction is recursive in the given
degree and a slight modification of this construction would permit
to code the degree into the set $A$.

\medskip
\noindent
For the second item, consider a set $A \subseteq \{4^n: n \in {\mathbb N}\}$.
Every many-one degree contains such a set. For each binary string $\sigma$,
let
\begin{quote}
$S_\sigma = \{\tau \in \{0,1\}^*: 4^{|\sigma|-1} < |\tau| \leq 4^{|\sigma|}$
and $\tau(4^n) = \sigma(n)$ for all $n < |\sigma|$ and $\tau(n) = 0$
for all $n < |\tau|$ which are not a power of $4\}$.
\end{quote}
In other word, for every $A \subseteq \{4^n: n \in {\mathbb N}\}$,
$S_{A(1)A(4)A(16)\ldots A(4^n)}$ contains those $\tau$ which are a
prefix of $A$ and for which $\tau(4^n)$ is defined but not $\tau(4^{n+1})$.
For each $e,k,n$ where $\varphi_e$ is a total function $t$,
we now try to find inductively for $m = 4^n+1,4^n+2,\ldots,4^{n+1}$
strings $\sigma_m \in \{0,1\}^{n+1}$ such that
whenever $\sigma_m$ is found then it is different from all those
$\sigma_{m'}$ which have been found for some $m'<m$
and the unique $\tau \in S_{\sigma_m} \cap \{0,1\}^m$
satisfies $C^t(\tau\,|\,m) \geq e+3k$. Note that due to the
resource-bound on $C^t$ one can for each $m'<m$ check whether
$\sigma_{m'}$ exists and take this information into account
when trying to find $\sigma_m$. Therefore, for those $m$ where
$\sigma_m$ exists, the $\tau \in S_{\sigma_m} \cap \{0,1\}^m$
can be computed from $m$, $e$ and $k$ and hence
$C(\tau\,|\,m) \leq e+k+c$ for some constant $c$ independent of
$e,k,n,m$.

Now assume that $A$ is not infinitely often $O(1)\deepc$.
Then there is a total function $t = \varphi_e$
and a $k > c$ such that
$C(\tau\,|\,|\tau|) \geq C^t(\tau\,|\,|\tau|)-k$ for all prefixes 
$\tau$ of $A$. It follows that in particular never a $\sigma_m$ with
$S_{\sigma_m}$ consisting of prefixes of $A$ is selected
in the above algorithm using $e,k$. This then implies that
for almost all $n$ and the majority of the $m$ in the interval
from $4^n$ to $4^{n+1}$ (which are those for which $\sigma_m$ does
not get defined) it holds that $C^t(\tau\,|\,m) \leq e+3k$ for
the unique $\tau \in S_{A(1)A(4)A(16)\ldots A(4^n)} \cap \{0,1\}^m$.
There are at most $2^{e+3k+2}$ many strings $\sigma \in \{0,1\}^{n+1}$
such that at least half of the members $\tau$ of $S_{\sigma}$ satisfy
that $C(\tau\,|\,|\tau|) \leq e+3k$ and there is a constant $c'$
such that for almost all $n$ the corresponding $\sigma$ satisfy
$C(\sigma | n) \leq e+3k+c'$.
It follows that $C(\tau\,|\,|\tau|) \leq e+3k+c''$ for some constant
$c''$ and almost all $n$ and all $\tau \in S_{A(1)A(4)A(16)\ldots A(4^n)}$;
in other words, $C(\tau\,|\,|\tau|) \leq e+3k+c''$ for some constant
$c''$ and almost all prefixes $\tau$ of $A$. Hence $A$ is recursive
\cite[Exercise 2.3.4 on page 131]{Li2008}.

\medskip
\noindent
For the third item, note that Merkle, Kjos-Hanssen and Stephan
\cite[Theorem 2.7]{KMS11} showed that a set $A$ has DNR Turing
degree iff there is a function $f \leq_T A$ such that $C(f(n)) \geq n$
for all $n$. It will be shown that sets which are neither recursive
nor i.o.\ $O(1)\deepc$ will permit to construct such a function $f \leq_T A$
and are thus DNR.

Assume now that there is a time bound $t$ and a constant $c$ such that,
for all $n$, $C(A\harp n \ | \ n)+c \geq C^t(A\harp n \ | \ n)$.
Now, for input $n$, one searches relative to $A$ for an $m$ such
that $C^t(A \harp m \ | \ m) \geq n+c$ and lets
$f(n) = A \harp m$ for the so found $m$. As $A$ is not recursive this
search terminates for every $n$ and obviously $f \leq_T A$.
By assumption, $C(A \harp m \ | \ m)+c \geq n+c$ and, assuming
a suitable compatibility between conditional and normal Kolmogorov
complexity, $C(A \harp m) \geq n$, that is, $C(f(n)) \geq n$.
\end{proof}

\medskip
\noindent
Due to these connections, non-recursive
and non-high r.e.\ sets are a natural example
of sets where all members of the Turing degree satisfy that
they are i.o.\ $O(1)\deepc$ but not $O(1)\deepc$.

Results of Franklin and Stephan \cite{franklinstephan} imply that for every
Schnorr trivial set and every order $h$ it holds that $A$ is not
i.o.\ $h\deepc$, as for every order $h$ there is a time bound $t$
such that the function $n \mapsto C^t(A \harp n\,|\,n)$
grows slower than $h$. Thus, the second and third points cannot be
generalised to i.o.\ order$\deepc$.

It also follows that there are high truth-table degrees and hyperimmune-free
Turing degrees which do not contain any i.o.\ order$\deepc$ set. These are
obtained by considering examples for Schnorr trivial sets such as the
following ones: all maximal sets and, for every partial recursive
$\{0,1\}$-valued function $\psi$ whose domain is a maximal set,
all sets $A$ satisfying $\forall x\,\psi(x)\!\downarrow\,\Rightarrow
A(x) = \psi(x)$.

\section{Conclusion}

\noindent
We conclude that the choice of the depth magnitude has consequences on
the computational power of
the corresponding deep sets,
and that larger magnitudes is not necessarily preferable over smaller
magnitudes.
Therefore choosing the appropriate depth magnitude for one's purpose
is delicate,
as the corresponding depth notions might be very different. 
When the depth magnitude is large, we proved that depth and highness coincide.
We showed that this is not the case for smaller depth magnitude by
constructing a low order deep set, but the set is not r.e.
We therefore ask whether there is a low $O(1) \deepk$ r.e.\ set.

From our results, for magnitudes of order $O(1)$, $K$-depth behaves
better than $C$-depth.
To further strengthen that observation we ask whether there is  an MLR
$O(1) \deepc$ set.

\end{document}